\newtheorem{theorem}{Theorem}
\newtheorem{corollary}[theorem]{Corollary}
\newtheorem{lemma}[theorem]{Lemma}
\newtheorem{proposition}[theorem]{Proposition}
\newtheorem{remark}[theorem]{Remark}
\newenvironment{proof}[1][Proof.]{\begin{trivlist}
\item[\hskip \labelsep {\bfseries #1}]}{\end{trivlist}}
\newenvironment{acknowledgement}[1][Acknowledgement]{\begin{trivlist}
\item[\hskip \labelsep {\bfseries #1}]}{\end{trivlist}}
\newcommand{\AmS}{{\protect\the\textfont2
  A\kern-.1667em\lower.5ex\hbox{M}\kern-.125emS}}
\title{Interval total colorings of graphs}
\author{P.A. Petrosyan\address[MCSD]{Department of Informatics and Applied Mathematics,\\
Yerevan State University, 0025, Armenia}%
\address{Institute for Informatics and Automation Problems,\\
National Academy of Sciences, 0014, Armenia}%
\thanks{email: pet\_petros@ipia.sci.am},
        A.Yu. Torosyan\addressmark[MCSD]%
\thanks {email: arman.yu.torosyan@gmail.com},
        N.A. Khachatryan\addressmark[MCSD]%
\thanks {email: xachnerses@gmail.com}}
\begin{document}

\maketitle

\begin{abstract}
A total coloring of a graph $G$ is a coloring of its vertices and
edges such that no adjacent vertices, edges, and no incident
vertices and edges obtain the same color. An \emph{interval total
$t$-coloring} of a graph $G$ is a total coloring of $G$ with colors
$1,2,\ldots,t$ such that at least one vertex or edge of $G$ is
colored by $i$, $i=1,2,\ldots,t$, and the edges incident to each
vertex $v$ together with $v$ are colored by $d_{G}(v)+1$ consecutive
colors, where $d_{G}(v)$ is the degree of the vertex $v$ in $G$. In
this paper we investigate some properties of interval total
colorings. We also determine exact values of the least and the
greatest possible number of colors in such colorings for some
classes of graphs.\\

Keywords: total coloring, interval coloring, connected graph,
regular graph, bipartite graph

\end{abstract}

\section{Introduction}\

A total coloring of a graph $G$ is a coloring of its vertices and
edges such that no adjacent vertices, edges, and no incident
vertices and edges obtain the same color. The concept of total
coloring was introduced by V. Vizing \cite{b22} and independently by
M. Behzad \cite{b4}. The total chromatic number $\chi^{\prime
\prime}\left(G\right)$ is the smallest number of colors needed for
total coloring of $G$. In 1965 V. Vizing and M. Behzad conjectured
that $\chi^{\prime\prime}\left( G\right)\leq \Delta (G)+2$ for every
graph $G$ \cite{b4,b22}, where $\Delta (G)$ is the maximum degree of
a vertex in $G$. This conjecture became known as Total Coloring
Conjecture \cite{b10}. It is known that Total Coloring Conjecture
holds for cycles, for complete graphs \cite{b5}, for bipartite
graphs, for complete multipartite graphs \cite{b25}, for graphs with
a small maximum degree \cite{b11,b12,b18,b21}, for graphs with
minimum degree at least $\frac{3}{4}\vert V(G)\vert$ \cite{b9}, and
for planar graphs $G$ with $\Delta(G)\neq 6$ \cite{b6,b10,b20}. M.
Rosenfeld \cite{b18} and N. Vijayaditya \cite{b21} independently
proved that the total chromatic number of graphs $G$ with $\Delta
(G)=3$ is at most $5$. A. Kostochka in \cite{b11} proved that the
total chromatic number of graphs with $\Delta(G)=4$ is at most $6$.
Later, also he in \cite{b12} proved that the total chromatic number
of graphs with $\Delta(G)=5$ is at most $7$. The general upper bound
for the total chromatic number was obtained by M. Molloy and B. Reed
\cite{b15}, who proved that $\chi^{\prime\prime}\left( G\right)\leq
\Delta (G)+10^{26}$ for every graph $G$. The exact value of the
total chromatic number is known only for paths, cycles, complete and
complete bipartite graphs \cite{b5}, $n$-dimensional cubes, complete
multipartite graphs of odd order \cite{b8}, outerplanar graphs
\cite{b26} and planar graphs $G$ with $\Delta (G)\geq 9$
\cite{b7,b10,b13,b23}.

The key concept discussed in a present paper is the following. Given
a graph $G$, we say that $G$ is interval total colorable if there is
$t\geq 1$ for which $G$ has a total coloring with colors $1,2,\ldots
,t$ such that at least one vertex or edge of $G$ is colored by $i$,
$1,2,\ldots,t$, and the edges incident to each vertex $v$ together
with $v$ are colored by $d_{G}(v)+1$ consecutive colors, where
$d_{G}(v)$ is the degree of the vertex $v$ in $G$.

The concept of interval total coloring \cite{b16,b17} is a new one
in graph coloring, synthesizing interval colorings \cite{b1,b2} and
total colorings. The introduced concept is valuable as it connects
to the problems of constructing a timetable without a
\textquotedblleft gap\textquotedblright and it extends to total
colorings of graphs one of the most important notions of classical
mathematics - the one of continuity.

In this paper we investigate some properties of interval total
colorings of graphs. Also, we show that simple cycles, complete
graphs, wheels, trees, regular bipartite graphs and complete
bipartite graphs have interval total colorings. Moreover, we obtain
some bounds for the least and the greatest possible number of colors
in interval total colorings of these graphs.
\bigskip

\section{Definitions and preliminary results}\

All graphs considered in this work are finite, undirected, and have
no loops or multiple edges. Let $V(G)$ and $E(G)$ denote the sets of
vertices and edges of $G$, respectively. An $(a,b)$-biregular
bipartite graph $G$ is a bipartite graph $G$ with the vertices in
one part having degree $a$ and the vertices in the other part having
degree $b$. The degree of a vertex $v\in V(G)$ is denoted by
$d_{G}(v)$, the maximum degree of vertices in $G$ by $\Delta (G)$,
the diameter of $G$ by $diam(G)$, the chromatic number of $G$ by
$\chi(G)$ and the edge-chromatic number of $G$ by
$\chi^{\prime}(G)$. A vertex $u$ of a graph $G$ is universal if
$d_{G}(u)=\vert V(G)\vert -1$. A \emph{proper edge-coloring} of a
graph $G$ is a coloring of the edges of $G$ such that no two
adjacent edges receive the same color. For a total coloring $\alpha$
of a graph $G$ and for any $v\in V(G)$, define the set
$S\left[v,\alpha\right]$ as follows:

\begin{center}
$S\left[v,\alpha \right]= \left\{\alpha (v)\right\}\cup \left\{
\alpha (e)\left\vert \text{ }e\text{ is incident to }v\right.
\right\}$
\end{center}

Let $\left\lfloor a\right\rfloor $ ($\left\lceil a\right\rceil $)\
denote the greatest (the least) integer $\leq a$ ($\geq a$). For two
integers $a\leq b$, the set $\left\{a,a+1,\ldots ,b\right\}$ is
denoted by $\left[a,b\right]$.

An \emph{interval $t$-coloring} of a graph $G$ is a proper
edge-coloring of $G$ with colors $1,2,\ldots,t$ such that at least
one edge of $G$ is colored by $i$, $i=1,2,\ldots,t$, and the edges
incident to each vertex $v$ are colored by $d_{G}(v)$ consecutive
colors. A graph $G$ is interval colorable if there is $t\geq 1$ for
which $G$ has an interval $t$-coloring. The set of all interval
colorable graphs is denoted by $\mathfrak{N}$. For a graph $G\in
\mathfrak{N}$, the greatest value of $t$ for which $G$ has an
interval $t$-coloring is denoted by $W\left(G\right)$.

An \emph{interval total $t$-coloring} of a graph $G$ is a total
coloring of $G$ with colors $1,2,\ldots,t$ such that at least one
vertex or edge of $G$ is colored by $i$, $i=1,2,\ldots,t$, and the
edges incident to each vertex $v$ together with $v$ are colored by
$d_{G}(v)+1$ consecutive colors.

For $t\geq 1$, let $\mathfrak{T}_{t}$ denote the set of graphs which
have an interval total $t$-coloring, and assume: $\mathfrak{T}=
\underset{t\geq 1}{\bigcup} \mathfrak{T}_{t}$. For a graph $G\in
\mathfrak{T}$, the least and the greatest values of $t$ for which
$G\in \mathfrak{T}_{t}$ are denoted by $w_{\tau }\left(G\right)$ and
$W_{\tau }\left( G\right)$, respectively. Clearly,

\begin{center}
$\chi^{\prime \prime}\left(G\right)\leq w_{\tau }\left( G\right)\leq
W_{\tau }\left(G\right)\leq \vert V(G)\vert + \vert E(G)\vert$ for
every graph $G\in \mathfrak{T}$.
\end{center}

Terms and concepts that we do not define can be found in
\cite{b24,b25}.

We will use the following two results.

\begin{theorem}
\label{mytheorem1}\cite{b1,b2}. If $G$ is a connected triangle-free
graph and $G\in \mathfrak{N}$, then
\begin{center}
$W(G)\leq \vert V(G)\vert -1$.
\end{center}
\end{theorem}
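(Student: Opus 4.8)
The plan is to fix an interval $t$-coloring $\alpha$ of $G$ attaining $t=W(G)$ and to show $t\le |V(G)|-1$; write $n=|V(G)|$. For each vertex $v$ the colors on its incident edges form, by definition of an interval coloring, a block of $d_G(v)$ consecutive colors, which I will denote $[a_v,b_v]$ with $b_v-a_v+1=d_G(v)$. Since every color is used, the edge colored $1$ forces some vertex with $a_v=1$ and the edge colored $t$ forces some vertex with $b_v=t$. The goal is to charge the $t$ colors to distinct vertices while leaving at least one vertex uncharged; equivalently, I would try to select one edge $e_i$ of each color $i\in[1,t]$ so that $\{e_1,\dots,e_t\}$ is acyclic, since a forest on $n$ vertices has at most $n-1$ edges and this yields $t\le n-1$ at once.

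First I would record the basic local facts. For an edge $uv$ with $\alpha(uv)=c$ we have $c\in[a_u,b_u]\cap[a_v,b_v]$, so the blocks of adjacent vertices always overlap. Moreover, triangle-freeness enters exactly here: if a color $c'\ne c$ also lies in $[a_u,b_u]\cap[a_v,b_v]$, then $c'$ appears on an edge $uu'$ at $u$ and on an edge $vv'$ at $v$, and $u'=v'$ would create the triangle $uvu'$; hence $u'\ne v'$, so overlaps of adjacent blocks are spread out over distinct vertices rather than folding back onto a common neighbor.

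Next I would prove the connectivity lemma that for every $i\in[1,t-1]$ some block contains both $i$ and $i+1$. Indeed, set $V_i=\{v:\,a_v\le i\le b_v\}$; if no block contained both $i$ and $i+1$, then putting $P=\{v:\,b_v\le i\}$ and $Q=\{v:\,a_v\ge i+1\}$ would partition $V(G)$, with $P$ and $Q$ both nonempty (the endpoints of the edges colored $i$ and $i+1$ fall into $P$ and $Q$ respectively). No edge can join $P$ to $Q$, because its color would have to be simultaneously $\le i$ and $\ge i+1$; this contradicts connectedness. So consecutive colors are always linked through a common block, which is what lets the chosen representatives be knitted together without being forced into a cycle.

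The main obstacle is turning these local statements into the \emph{sharp} global bound $n-1$: one must prevent the colors from piling up at a few high-degree vertices, which in a graph with triangles is precisely what permits the weaker estimate $W(G)\le 2n-3$. Concretely, the delicate step is to verify a Rado/Hall-type condition — that the edges using any set $C$ of colors span, in the cycle matroid, rank at least $|C|$ — so that an acyclic system of color representatives exists; the spread-out-overlap property from triangle-freeness is what should guarantee that $k$ colors cannot be confined to fewer than $k+1$ vertices. I would carry this out by selecting representatives greedily from the largest color downward, at each step using the connectivity lemma to find an edge of the current color joining two different components of the forest built so far, and using triangle-freeness to rule out the configurations in which every edge of that color closes a cycle. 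Extracting the final ``$-1$'' (distinguishing the bound $n-1$ from $n$) is the crux, and it is exactly the place where the hypothesis that $G$ is triangle-free, rather than merely connected, is indispensable.
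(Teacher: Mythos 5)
This theorem is not proved in the paper at all: it is imported from \cite{b1,b2} as a known tool, so there is no in-paper argument to measure your attempt against. Judged on its own, your write-up is a plan rather than a proof, and the plan has a genuine gap at exactly the point you flag as ``the crux.'' Your preliminary observations are all correct: adjacent palettes $[a_u,b_u]$ and $[a_v,b_v]$ overlap because they share $\alpha(uv)$; triangle-freeness forces a second common color to be realized at distinct third vertices; and the partition argument showing that some palette contains both $i$ and $i+1$ is sound. But none of this yet touches the bound $n-1$.

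The real problem is the step you defer. If you go the Rado/Hall route, the rank condition you must verify for the full color set $C=[1,t]$ is $\operatorname{rank}(E(G))=n-1\geq t$, which \emph{is} the theorem; so that formulation does not reduce the problem to anything easier, and the weaker instances of the condition (proper subsets $C$) are not where the difficulty lives. If instead you go the greedy route, you must show that for each color there is an edge of that color joining two distinct components of the forest built so far, and nothing you have established rules out the bad case: a color class may consist of a single edge, or of several edges all of whose endpoints already lie in one component, and neither the consecutive-overlap lemma nor the ``distinct third vertices'' observation prevents this. That such an acyclic system of representatives cannot exist in general is shown by $K_4$, where $W(K_4)=5>|V|-1$; so any successful version of your greedy/exchange argument must invoke triangle-freeness in an essential, explicitly verified way, and you have not exhibited that mechanism. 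Until the acyclic-transversal claim is actually proved (or replaced by a different global argument), the proposal does not establish $W(G)\leq |V(G)|-1$.
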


\begin{theorem}
\label{mytheorem2}\cite{b3}. If $G$ is a connected $(a,b)$-biregular
bipartite graph with $\vert V(G)\vert \geq 2(a+b)$ and $G\in
\mathfrak{N}$, then
\begin{center}
$W(G)\leq \vert V(G)\vert -3$.
\end{center}
\end{theorem}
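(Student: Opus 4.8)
The plan is to take an interval $W(G)$-coloring $\alpha$ of $G$, set $t=W(G)$, and sharpen Theorem~\ref{mytheorem1} by exactly two. Since $G$ is bipartite it is triangle-free, so Theorem~\ref{mytheorem1} already gives $t\le \vert V(G)\vert-1$; the whole task is to recover the extra $2$, and I would do this by contradiction, assuming $t\ge \vert V(G)\vert-2$. Fix the bipartition $V(G)=X\cup Y$, where every vertex of $X$ has degree $a$ and every vertex of $Y$ has degree $b$. For each vertex $v$ write $c(v)$ and $C(v)$ for the least and greatest colors appearing on edges incident to $v$; since $\alpha$ is an interval coloring, $C(v)=c(v)+d_{G}(v)-1$, so the color interval $[c(v),C(v)]$ has length $a$ when $v\in X$ and length $b$ when $v\in Y$, and these intervals cover $[1,t]$.

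The first structural input I would record is that every color class is a matching, and in a bipartite graph a matching covers equally many vertices in each part. Hence, writing $V_{i}$ for the set of vertices incident to an edge of color $i$, one has $\vert V_{i}\cap X\vert=\vert V_{i}\cap Y\vert\ge 1$ for every $i\in[1,t]$, and connectedness of $G$ forces the family $\{[c(v),C(v)]\}_{v}$ to be overlapping and to have union exactly $[1,t]$. The second input concerns the two extreme colors. Let $x_{1}y_{1}$ be an edge colored $1$ (with $x_{1}\in X$, $y_{1}\in Y$) and $x_{t}y_{t}$ an edge colored $t$; then $[c(x_{1}),C(x_{1})]=[1,a]$ and $[c(y_{1}),C(y_{1})]=[1,b]$, with the symmetric statements at the top. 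The $b$ edges at $y_{1}$ reach $b$ pairwise non-adjacent vertices of $X$ whose intervals lie inside $[1,a+b-1]$, and the $a$ edges at $x_{1}$ reach $a$ pairwise non-adjacent vertices of $Y$, so the low end produces at least $a+b$ distinct vertices confined to the colors $[1,a+b-1]$, and symmetrically the high end produces at least $a+b$ distinct vertices confined to $[t-a-b+2,t]$.

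This is exactly where the hypothesis $\vert V(G)\vert\ge 2(a+b)$ enters: combined with $t\ge\vert V(G)\vert-2$ it gives $t\ge 2(a+b)-2$, which makes the low color range $[1,a+b-1]$ and the high color range $[t-a-b+2,t]$ disjoint; consequently the two end-families of vertices are disjoint and can be analyzed independently, and the colors strictly between them must be covered by still further vertices. The goal is then to show that, once the biregular constraints $\vert V_{i}\cap X\vert=\vert V_{i}\cap Y\vert$ and the fixed interval lengths $a,b$ are taken into account, the number of vertices forced at the two ends plus the number needed to cover the middle colors must exceed $\vert V(G)\vert-2$, which contradicts $t\ge\vert V(G)\vert-2$ and forces $t\le\vert V(G)\vert-3$. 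Informally, each end must \emph{waste} one vertex whose interval is strictly interior and therefore does not help extend the covered color range, and the size condition guarantees that the two wasted vertices are genuinely distinct and are not already among the vertices counted elsewhere.

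The main obstacle is precisely this refined extremal count: a naive length/covering estimate is too weak, because a single vertex can absorb up to $b$ consecutive colors, so merely tallying vertices against colors does not produce the required surplus. The delicate work is to use the part-balance of each color class together with the rigidity of the interval structure near colors $1,2$ and $t-1,t$ to pin down how the $a$- and $b$-length intervals must overlap at each end, thereby extracting exactly one surplus vertex per end. One must also verify that the argument runs uniformly in the regular case $a=b$ and the strictly biregular case $a\ne b$, and check that the degenerate possibilities for small $a$ or $b$ are absorbed by the hypothesis $\vert V(G)\vert\ge 2(a+b)$.
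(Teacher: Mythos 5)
First, a point of reference: the paper does not prove this statement at all --- Theorem~\ref{mytheorem2} is imported from \cite{b3} and used as a black box --- so there is no internal proof to measure your attempt against, and it must be judged on its own. Judged that way, it has a genuine gap: it is a strategy outline whose decisive step is announced but never carried out. The preliminaries are sound: the intervals $[c(v),C(v)]$ have fixed lengths $a$ and $b$, each color class is a matching meeting the two parts equally, the $a+b$ neighbors of the endpoints of an edge colored $1$ have their intervals inside $[1,a+b-1]$, the symmetric statement holds at color $t$, and the hypotheses $\vert V(G)\vert\geq 2(a+b)$ and $t\geq\vert V(G)\vert-2$ make the two end ranges disjoint. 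But at exactly the point where the theorem has to be earned you write that ``the goal is then to show'' the vertex count exceeds $\vert V(G)\vert-2$, and that ``the delicate work is to \ldots extract exactly one surplus vertex per end.'' Neither of these is done. The claim that each end must waste one interior vertex is precisely the content of the theorem beyond the triangle-free bound of Theorem~\ref{mytheorem1}; asserting that it can be extracted from the part-balance and the rigidity near the extreme colors is not the same as extracting it.

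Moreover, the bookkeeping as you have set it up does not visibly close even in the tightest case. If $\vert V(G)\vert=2(a+b)$ and $t=2(a+b)-2$, your two end families account for all $2(a+b)$ vertices and their admissible color ranges $[1,a+b-1]$ and $[a+b,t]$ already cover $[1,t]$, so no covering contradiction arises; one must switch to a different observation (for instance, that no edge can join the two families because their color ranges are disjoint, contradicting connectivity), and for $\vert V(G)\vert>2(a+b)$ even that needs real reworking because additional vertices can carry the middle colors. So the proposal correctly diagnoses the obstacle --- the naive length-versus-coverage count is too weak --- but does not overcome it, and as written it establishes nothing beyond the $\vert V(G)\vert-1$ bound already given by Theorem~\ref{mytheorem1}.
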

\bigskip

\section{Some properties of interval total colorings of graphs}\

First we prove a simple property of interval total colorings that
for any interval total coloring of a graph $G$ there is an inverse
interval total coloring of the same graph.

\begin{proposition}
\label{myproposition1}
If $\alpha$ is an interval total $t$-coloring of a graph $G$, then a
total coloring $\beta$, where

1) $\beta(v)=t+1-\alpha(v)$ for each $v\in V(G)$,

2) $\beta(e)=t+1-\alpha(e)$ for each $e\in E(G)$,\\
is also an interval total $t$-coloring of a graph $G$.
\end{proposition}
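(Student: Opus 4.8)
The plan is to verify that the map $\beta$ defined by "reflecting" every color $i$ to $t+1-i$ inherits all three defining properties of an interval total $t$-coloring directly from $\alpha$. The reflection $i \mapsto t+1-i$ is an involution on the color set $[1,t]$, and it is both strictly decreasing and a bijection; these two algebraic facts are the only things I expect to need. The whole argument is a routine verification, so I would organize it as three checks corresponding to the three conditions in the definition.

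First I would check that $\beta$ is a proper total coloring. Suppose two adjacent or incident elements $x,y$ (vertices, edges, or a vertex and an incident edge) received the same color under $\beta$, i.e. $t+1-\alpha(x) = t+1-\alpha(y)$. Cancelling gives $\alpha(x)=\alpha(y)$, contradicting that $\alpha$ is a proper total coloring. Hence $\beta$ respects all the adjacency/incidence constraints and is a legitimate total coloring.

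Next I would check that $\beta$ uses every color in $[1,t]$. Since $\alpha$ is a bijection-respecting coloring that uses each color $i \in [1,t]$ on at least one vertex or edge, and the reflection sends $i$ to $t+1-i$, the set of colors used by $\beta$ is exactly $\{t+1-i : i \in [1,t]\} = [1,t]$; each color is used at least once because its preimage color was used by $\alpha$. Finally, for the interval (consecutiveness) condition, fix a vertex $v$ and consider $S[v,\alpha]$. By hypothesis this is a block of $d_G(v)+1$ consecutive integers, say $[p,q]$ with $q-p = d_G(v)$. Applying the reflection, $S[v,\beta] = \{t+1-j : j \in [p,q]\} = [t+1-q,\, t+1-p]$, which is again an interval of exactly $d_G(v)+1$ consecutive colors, since a strictly decreasing affine map carries a set of consecutive integers to a set of consecutive integers of the same cardinality.

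There is no serious obstacle here; the only point requiring a word of care is the observation that an affine reflection preserves consecutiveness of integers (equivalently, maps an integer interval onto an integer interval), which is what makes the per-vertex color blocks remain blocks of the correct length. Collecting the three verified properties shows $\beta$ is an interval total $t$-coloring of $G$, completing the argument.
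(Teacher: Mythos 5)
Your proof is correct and follows essentially the same route as the paper's: the paper likewise observes that every color $i\in[1,t]$ still appears under the reflection and that $S[v,\alpha]=[a,b]$ becomes $S[v,\beta]=[t+1-b,\,t+1-a]$. Your version merely spells out the properness check and the interval-preservation argument in more detail than the paper's one-line treatment.
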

\begin{proof} Clearly, a total coloring $\beta$ contains at least
one vertex or edge with color $i$, $i=1,2,\ldots,t$. Since
$S[v,\alpha]$ is an interval for each $v\in V(G)$, hence
$S[v,\alpha]=[a,b]$. By the definition of the coloring $\beta$ it
follows that $S[v,\beta]=[t+1-b,t+1-a]$ for each $v\in V(G)$.
 $~\square$
\end{proof}

Next we prove the proposition which implies that in definition of
interval total $t$-coloring, the requirement that every color $i$,
$i=1,2,\ldots,t$, appear in an interval total $t$-coloring isn't
necessary in the case of connected graphs.

\begin{proposition}
\label{myproposition2} Let $\alpha$ be a total coloring of the
connected graph $G$ with colors $1,2,\ldots,t$ such that the edges
incident to each vertex $v\in V(G)$ together with $v$ are colored by
$d_{G}(v)+1$ consecutive colors, and $\min_{v\in V(G),e\in
E(G)}\{\alpha(v),\alpha(e)\}=1$, $\max_{v\in V(G),e\in
E(G)}\{\alpha(v),\alpha(e)\}=t$. Then $\alpha$ is an interval total
$t$-coloring of $G$.
\end{proposition}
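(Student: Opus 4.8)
The plan is to show that the only feature absent from the definition of an interval total $t$-coloring, namely that each color $i\in\{1,2,\ldots,t\}$ actually appears on some vertex or edge, follows automatically from connectivity together with the hypothesis that each $S[v,\alpha]$ is an interval. Since we are already given $\min\{\alpha(v),\alpha(e)\}=1$ and $\max\{\alpha(v),\alpha(e)\}=t$, it suffices to prove that the set of colors actually used has no gaps.

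First I would observe that the set of all colors used on vertices and edges is exactly $C:=\bigcup_{v\in V(G)}S[v,\alpha]$. Indeed, every vertex color $\alpha(v)$ lies in $S[v,\alpha]$, and every edge color $\alpha(e)$ with $e=uv$ lies in both $S[u,\alpha]$ and $S[v,\alpha]$; so no used color escapes this union. Thus proving that all of $1,2,\ldots,t$ appear reduces to proving that $C=[1,t]$.

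The key step is to show that $C$ is an interval. The basic building block is this: if $u$ and $v$ are adjacent via $e=uv$, then $\alpha(e)$ belongs to both intervals $S[u,\alpha]$ and $S[v,\alpha]$, so the two intervals overlap and their union $S[u,\alpha]\cup S[v,\alpha]$ is again an interval. I would then exploit connectivity. Fix a vertex $v_0$, and for each vertex $v$ choose a path $v_0=w_0,w_1,\ldots,w_k=v$. Walking along the path and repeatedly applying the building block (consecutive spectra overlap in the color of the joining edge), the union $U_v:=\bigcup_{i}S[w_i,\alpha]$ is an interval, and it contains the fixed nonempty interval $S[v_0,\alpha]$. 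Hence all the intervals $U_v$ share the common color $\alpha(v_0)$, and therefore their union over all $v$ is an interval; but that union is precisely $C$, since each $S[v,\alpha]\subseteq U_v\subseteq C$.

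Finally, $C$ is an interval of integers whose minimum is $1$ and whose maximum is $t$, which forces $C=\{1,2,\ldots,t\}$. Together with the given interval property of every $S[v,\alpha]$, this shows that $\alpha$ meets all the requirements of an interval total $t$-coloring. The one place to be careful, and the main (mild) obstacle, is the induction along the path: at each step one must verify that the running union and the next spectrum genuinely overlap (they share the joining edge's color) so that the elementary fact ``the union of two overlapping integer intervals is an interval'' applies cleanly; everything else is bookkeeping.
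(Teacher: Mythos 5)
Your proof is correct, and it takes a noticeably cleaner route than the paper's. The paper fixes a color $s$ with $1<s<t$, splits into four cases according to whether the extreme colors $1$ and $t$ are realized by vertices or by edges, takes a path between the two extremal objects, and runs a discrete intermediate-value argument: if $s$ appears nowhere on the path, some consecutive pair of path edges must satisfy $\alpha(e_{i_0})<s<\alpha(e_{i_0+1})$, so the interval $S[v_{i_0},\alpha]$ contains $s$. You instead prove the stronger global statement that the full set of used colors $C=\bigcup_{v}S[v,\alpha]$ is itself an interval, via the lemma that a chain of pairwise-overlapping integer intervals (consecutive spectra along a path share the color of the joining edge) unions to an interval, and then conclude $C=[1,t]$ from the given minimum and maximum. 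Both arguments exploit exactly the same mechanism -- connectivity plus the fact that adjacent spectra overlap in the shared edge's color -- but your packaging eliminates the four-case analysis entirely and never needs to locate where colors $1$ and $t$ occur; the paper's version, by contrast, isolates the single missing color $s$ and is closer in spirit to the path-based degree-sum bounds it reuses later (Theorem 4). One small point worth making explicit in a final write-up: the identification of $C$ with the set of used colors uses that every vertex color lies in its own spectrum and every edge color lies in the spectrum of either endpoint, which you do state; the rest is, as you say, bookkeeping.
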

\begin{proof}
For the proof of the proposition it suffices to show that if $t\geq
3$, then for color $s$, $1<s<t$, there exists at least one vertex or
edge of $G$ which is colored by $s$. We consider four possible
cases.

Case 1: there are vertices $v,v^{\prime}\in V(G)$ such that
$\alpha(v)=1$, $\alpha(v^{\prime})=W_{\tau}(G)$.

Since $G$ is connected, there exists a simple path $P_{1}$ joining
$v$ with $v^{\prime}$, where
\begin{center}
$P_{1}=(v_{0},e_{1},v_{1},\ldots,v_{i-1},e_{i},v_{i},\ldots,v_{k-1},e_{k},v_{k})$,
$v_{0}=v$, $v_{k}=v^{\prime}$.
\end{center}

If $\alpha(v_{i})\neq s$, $i=1,2,\ldots,k-1$, and $\alpha(e_{j})\neq
s$, $j=1,2,\ldots,k$, then there exists an index $i_{0}$, $1\leq
i_{0}<k$, such that $\alpha(e_{i_{0}})<s$ and
$\alpha(e_{i_{0}+1})>s$. Hence, there is an edge of $G$ colored by
$s$ which is incident to $v_{i_{0}}$. This implies that for any $s$,
$1<s<t$, there is a vertex or an edge with color $s$.

Case 2: there is a vertex $v$ and there is an edge $e^{\prime}$ such
that $\alpha(v)=1$, $\alpha(e^{\prime})=W_{\tau}(G)$.

Let $e^{\prime}=v^{\prime}w$ and $P_{2}$ be a simple path joining
$v$ with $v^{\prime}$, where
\begin{center}
$P_{2}=(v_{0},e_{1},v_{1},\ldots,v_{i-1},e_{i},v_{i},\ldots,v_{k-1},e_{k},v_{k})$,
$v_{0}=v$, $v_{k}=v^{\prime}$.
\end{center}

If $\alpha(v_{i})\neq s$, $i=1,2,\ldots,k$, and $\alpha(e_{j})\neq
s$, $j=1,2,\ldots,k$, then there exists an index $i_{1}$, $1\leq
i_{1}<k$, such that $\alpha(e_{i_{1}})<s$ and
$\alpha(e_{i_{1}+1})>s$. Hence, there is an edge of $G$ colored by
$s$ which is incident to $v_{i_{1}}$. This implies that for any $s$,
$1<s<t$, there is a vertex or an edge with color $s$.

Case 3: there is an edge $e$ and there is a vertex $v^{\prime}$ such
that $\alpha(e)=1$, $\alpha(v^{\prime})=W_{\tau}(G)$.

Let $e=uv$ and $P_{3}$ be a simple path joining $v$ with
$v^{\prime}$, where
\begin{center}
$P_{3}=(v_{0},e_{1},v_{1},\ldots,v_{i-1},e_{i},v_{i},\ldots,v_{k-1},e_{k},v_{k})$,
$v_{0}=v$, $v_{k}=v^{\prime}$.
\end{center}

If $\alpha(v_{i})\neq s$, $i=0,1,\ldots,k-1$, and $\alpha(e_{j})\neq
s$, $j=1,2,\ldots,k$, then there exists an index $i_{2}$, $1\leq
i_{2}<k$, such that $\alpha(e_{i_{2}})<s$ and
$\alpha(e_{i_{2}+1})>s$. Hence, there is an edge of $G$ colored by
$s$ which is incident to $v_{i_{2}}$. This implies that for any $s$,
$1<s<t$, there is a vertex or an edge with color $s$.

Case 4: there are edges $e,e^{\prime}\in E(G)$ such that
$\alpha(e)=1$, $\alpha(e^{\prime})=W_{\tau}(G)$.

Let $e=uv$, $e=v^{\prime}w$. Without loss of generality we may
assume that a simple path $P_{4}$ joining $e$ with $e^{\prime}$
joins $v$ with $v^{\prime}$, where
\begin{center}
$P_{4}=(v_{0},e_{1},v_{1},\ldots,v_{i-1},e_{i},v_{i},\ldots,v_{k-1},e_{k},v_{k})$,
$v_{0}=v$, $v_{k}=v^{\prime}$.
\end{center}

If $\alpha(v_{i})\neq s$, $i=0,1,\ldots,k$, and $\alpha(e_{j})\neq
s$, $j=1,2,\ldots,k$, then there exists an index $i_{3}$, $1\leq
i_{3}<k$, such that $\alpha(e_{i_{3}})<s$ and
$\alpha(e_{i_{3}+1})>s$. Hence, there is an edge of $G$ colored by
$s$ which is incident to $v_{i_{3}}$. This implies that for any $s$,
$1<s<t$, there is a vertex or an edge with color $s$. $~\square$
\end{proof}

Now we show that there is an intimate connection between interval
total colorings of graphs and interval edge-colorings of certain
bipartite graphs.

Let $G$ be a simple graph with $V(G)=\{v_{1},v_{2},\ldots,v_{n}\}$.
Define an auxiliary graph $H$ as follows:
\begin{center}
$V(H)=U\cup W$, where
\end{center}
\begin{center}
$U=\{u_{1},u_{2},\ldots,u_{n}\}$, $W=\{w_{1},w_{2},\ldots,w_{n}\}$
and
\end{center}
\begin{center}
$E(H)=\left\{u_{i}w_{j},u_{j}w_{i}|~v_{i}v_{j}\in E(G), 1\leq i\leq
n,1\leq j\leq n\right\}\cup \{u_{i}w_{i}|~1\leq i\leq n\}$.
\end{center}

Clearly, $H$ is a bipartite graph with $\vert V(H)\vert = 2\vert
V(G)\vert$.

\begin{theorem}
\label{mytheorem3} If $\alpha$ is an interval total $t$-coloring of
the graph $G$, then there is an interval $t$-coloring $\beta$ of the
bipartite graph $H$.
\end{theorem}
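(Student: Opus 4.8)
The plan is to build the edge-coloring $\beta$ of $H$ directly from $\alpha$, pushing the vertex-colors of $G$ onto the ``diagonal'' edges $u_iw_i$ and the edge-colors of $G$ onto the corresponding pairs of edges. Concretely, I would set
\[
\beta(u_iw_i)=\alpha(v_i)\quad(1\le i\le n),\qquad \beta(u_iw_j)=\beta(u_jw_i)=\alpha(v_iv_j)\ \text{ for }v_iv_j\in E(G).
\]
By the construction of $E(H)$, every edge of $H$ is either some $u_iw_i$ or arises from an edge $v_iv_j$ of $G$, so this assigns to each edge of $H$ a color in $\{1,\ldots,t\}$.

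Next I would record the degrees and verify properness. The vertex $u_i$ is incident exactly to $u_iw_i$ and to the edges $u_iw_j$ with $v_iv_j\in E(G)$, whence $d_H(u_i)=d_G(v_i)+1$, and symmetrically $d_H(w_i)=d_G(v_i)+1$. The set of colors on the edges incident to $u_i$ is precisely $\{\alpha(v_i)\}\cup\{\alpha(v_iv_j):v_iv_j\in E(G)\}=S[v_i,\alpha]$, and the same holds at $w_i$. Since $\alpha$ is a total coloring, $v_i$ and the edges incident to it receive pairwise distinct colors, so these $d_G(v_i)+1$ colors are distinct; hence no two edges sharing the endpoint $u_i$ (resp. $w_i$) get the same color. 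As every adjacency in $H$ is a shared $u$- or $w$-endpoint, $\beta$ is proper.

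Then I would check the two remaining conditions of an interval $t$-coloring. Because $\alpha$ is an \emph{interval} total coloring, $S[v_i,\alpha]$ is a block of $d_G(v_i)+1$ consecutive colors; since the colors at $u_i$ form exactly $S[v_i,\alpha]$ and $d_H(u_i)=d_G(v_i)+1$, they are $d_H(u_i)$ consecutive integers, and likewise at $w_i$. Finally, each color $c\in\{1,\ldots,t\}$ occurs in $\alpha$ on some vertex or edge of $G$: if $\alpha(v_i)=c$ then $\beta(u_iw_i)=c$, while if $\alpha(v_iv_j)=c$ then $\beta(u_iw_j)=c$, so every color is used by $\beta$. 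Thus $\beta$ is an interval $t$-coloring of $H$.

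I do not anticipate a genuine obstacle: all the content lies in choosing the correspondence above, after which properness, consecutiveness of the color sets, and surjectivity onto $\{1,\ldots,t\}$ follow mechanically from the corresponding properties of $\alpha$. The single point that must be watched is the bookkeeping identity $d_H(u_i)=d_G(v_i)+1$, which guarantees that the extra ``$+1$'' supplied by the vertex color $\alpha(v_i)$ matches the extra diagonal edge $u_iw_i$ so that the interval lengths agree exactly.
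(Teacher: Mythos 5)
Your construction is exactly the one the paper uses ($\beta(u_iw_i)=\alpha(v_i)$ and $\beta(u_iw_j)=\beta(u_jw_i)=\alpha(v_iv_j)$); the paper simply declares the verification ``easy to see,'' whereas you spell out the degree identity $d_H(u_i)=d_G(v_i)+1$ and the resulting properness, interval, and surjectivity checks. Your proof is correct and follows the same route.
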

\begin{proof} For the proof, we define an edge-coloring $\beta$ of the graph $H$
as follows:
\begin{description}
\item[(1)] $\beta (u_{i}w_{j})=\beta (u_{j}w_{i})=\alpha(v_{i}v_{j})$ for
every edge $v_{i}v_{j}\in E(G)$,

\item[(2)] $\beta (u_{i}w_{i})=\alpha(v_{i})$ for
$i=1,2,\ldots,n$.
\end{description}

It is easy to see that $\beta$ is an interval $t$-coloring of the
graph $H$. $~\square$
\end{proof}

This theorem shows that any interval total $t$-coloring of a graph
$G$ can be transform into an interval $t$-coloring of the bipartite
graph $H$.

\begin{corollary}
\label{mycorollary1} If $G$ is a connected graph and $G\in
\mathfrak{T}$, then
\begin{center}
$W_{\tau}(G)\leq 2\vert V(G)\vert -1$.
\end{center}
\end{corollary}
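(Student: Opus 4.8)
The plan is to combine Theorem~\ref{mytheorem3} with the structural results about interval edge-colorings that are already available in the excerpt, namely Theorem~\ref{mytheorem1}. First I would observe that since $G$ is connected and $G\in\mathfrak{T}$, we may fix an interval total $W_{\tau}(G)$-coloring $\alpha$ of $G$. By Theorem~\ref{mytheorem3}, this yields an interval $W_{\tau}(G)$-coloring $\beta$ of the associated bipartite graph $H$; in particular $H\in\mathfrak{N}$, and by the definition of $W(H)$ we have $W_{\tau}(G)\le W(H)$.

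Next I would bring in the bound on $W(H)$. The graph $H$ is bipartite, hence triangle-free, and it satisfies $\vert V(H)\vert = 2\vert V(G)\vert$ by the construction preceding Theorem~\ref{mytheorem3}. If $H$ is connected, Theorem~\ref{mytheorem1} gives directly
\begin{center}
$W(H)\le \vert V(H)\vert - 1 = 2\vert V(G)\vert - 1$,
\end{center}
and chaining this with $W_{\tau}(G)\le W(H)$ produces exactly the desired inequality $W_{\tau}(G)\le 2\vert V(G)\vert - 1$.

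The main obstacle I anticipate is the connectivity hypothesis in Theorem~\ref{mytheorem1}: that theorem requires $H$ to be connected, but the bipartite double-cover-like construction of $H$ need not preserve connectedness (for instance, bipartite $G$ typically yields a disconnected $H$). So the crux of the argument is to handle the components of $H$. I would let $H_1,\dots,H_r$ be the connected components of $H$; each $H_j$ is itself a connected triangle-free graph lying in $\mathfrak{N}$ (the restriction of $\beta$ to $H_j$ is an interval coloring once one discards colors and renumbers, and since $\beta$ uses $W_{\tau}(G)$ distinct colors across all of $H$, at least one component attains a color span close to the full range). The care needed here is that Theorem~\ref{mytheorem1} bounds $W(H_j)\le \vert V(H_j)\vert - 1$ for each component, but $\beta$ restricted to a single component may use fewer than $W_{\tau}(G)$ colors, so I cannot simply apply the bound to one component.

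To get around this cleanly, I would argue that the coloring $\beta$ of $H$, being an interval coloring, assigns to each component $H_j$ an interval of colors $[a_j,b_j]$ with $b_j-a_j+1\le W(H_j)\le \vert V(H_j)\vert - 1$; since every color in $[1,W_{\tau}(G)]$ appears on some component, the union of these intervals covers $[1,W_{\tau}(G)]$, whence $W_{\tau}(G)\le \sum_{j=1}^{r}(b_j-a_j+1)\le \sum_{j=1}^{r}(\vert V(H_j)\vert - 1)=\vert V(H)\vert - r\le \vert V(H)\vert - 1 = 2\vert V(G)\vert - 1$, using $r\ge 1$. This completes the proof, and the key insight is that summing the per-component bounds and using that the color intervals jointly cover all $W_{\tau}(G)$ colors recovers the global estimate even when $H$ is disconnected.
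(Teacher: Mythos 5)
Your argument is correct and follows the same route as the paper: pass from the interval total $W_{\tau}(G)$-coloring of $G$ to an interval $W_{\tau}(G)$-coloring of $H$ via Theorem~\ref{mytheorem3}, then invoke Theorem~\ref{mytheorem1}. The only place you diverge is over the connectivity of $H$, which you flag as the main obstacle and then circumvent with a component-by-component argument. That obstacle is illusory: you have overlooked that $E(H)$ contains, besides the double-cover edges $u_iw_j,u_jw_i$, the perfect matching $\{u_iw_i\;:\;1\le i\le n\}$ arising from the vertex colors of the total coloring. These matching edges glue the two halves of the double cover together: if $v_iv_j\in E(G)$ then $u_i$ is joined to $w_j$ and $w_j$ to $u_j$, so connectivity of $G$ puts all the $u_i$ in one component of $H$, and each $w_i$ is attached to $u_i$ by its matching edge. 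Hence $H$ is connected whenever $G$ is --- including when $G$ is bipartite --- and the paper simply applies Theorem~\ref{mytheorem1} to $H$ directly. Your fallback is nevertheless sound: in an interval edge-coloring the set of colors appearing on a connected component is itself an interval (chain the overlapping vertex palettes along a path joining an edge of minimal color to one of maximal color), so each component $H_j$ contributes at most $\vert V(H_j)\vert-1$ colors and summing gives $W_{\tau}(G)\le \vert V(H)\vert-r\le 2\vert V(G)\vert-1$. So nothing in your proof is wrong; you have merely done extra work, and the justification you give for needing it (that a bipartite $G$ typically yields a disconnected $H$) is false for this particular construction.
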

\begin{proof} Let $\alpha$ be an interval total $W_{\tau}(G)$-coloring of the graph $G$.
By Theorem \ref{mytheorem3}, $\beta$ is an interval
$W_{\tau}(G)$-coloring of the graph $H$. Since $H$ is a connected
bipartite graph with $\vert V(H)\vert = 2\vert V(G)\vert$ and $H\in
\mathfrak{N}$, by Theorem \ref{mytheorem1}, we have
\begin{center}
$W_{\tau}(G)\leq \vert V(H)\vert -1 = 2\vert V(G)\vert-1$, thus
\end{center}
\begin{center}
$W_{\tau}(G)\leq 2\vert V(G)\vert-1$.
\end{center}
$~\square$
\end{proof}

\begin{remark}
\label{myremark1}
Note that the upper bound in Corollary
\ref{mycorollary1} is sharp for simple paths $P_{n}$, since
$W_{\tau}(P_{n})=2n-1$ for any $n\in \mathbf{N}$.
\end{remark}

\begin{corollary}
\label{mycorollary2} If $G$ is a connected $r$-regular graph with
$\vert V(G)\vert \geq 2r+2$ and $G\in \mathfrak{T}$, then
\begin{center}
$W_{\tau}(G)\leq 2\vert V(G)\vert -3$.
\end{center}
\end{corollary}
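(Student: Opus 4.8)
The plan is to follow the template of Corollary \ref{mycorollary1}, replacing the appeal to Theorem \ref{mytheorem1} by an appeal to Theorem \ref{mytheorem2}. First I would let $\alpha$ be an interval total $W_{\tau}(G)$-coloring of $G$ and pass to the auxiliary bipartite graph $H$ defined before Theorem \ref{mytheorem3}. By Theorem \ref{mytheorem3} this $\alpha$ induces an interval $W_{\tau}(G)$-coloring $\beta$ of $H$, so $H\in \mathfrak{N}$ and consequently $W_{\tau}(G)\leq W(H)$.

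Next I would pin down the structural features of $H$ that feed Theorem \ref{mytheorem2}. Since $G$ is $r$-regular, the vertex $u_{i}$ is joined to $w_{j}$ for each of the $r$ neighbours $v_{j}$ of $v_{i}$ in $G$, and additionally to $w_{i}$ via the matching edge $u_{i}w_{i}$; hence $d_{H}(u_{i})=d_{H}(w_{i})=r+1$ for every $i$. Thus $H$ is $(r+1)$-regular, that is, an $(r+1,r+1)$-biregular bipartite graph. Moreover, $\vert V(H)\vert =2\vert V(G)\vert$, and the hypothesis $\vert V(G)\vert \geq 2r+2$ gives $\vert V(H)\vert \geq 4(r+1)=2\bigl((r+1)+(r+1)\bigr)$, which is precisely the cardinality requirement of Theorem \ref{mytheorem2} with $a=b=r+1$.

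Applying Theorem \ref{mytheorem2} then yields $W(H)\leq \vert V(H)\vert -3=2\vert V(G)\vert -3$, and combining this with $W_{\tau}(G)\leq W(H)$ gives the desired bound $W_{\tau}(G)\leq 2\vert V(G)\vert -3$.

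The main point requiring care is that Theorem \ref{mytheorem2} demands that $H$ be connected, so I would verify this explicitly: given a path $v_{i}v_{j}v_{k}\cdots$ in the connected graph $G$, one obtains the walk $u_{i}-w_{j}-u_{j}-w_{k}-u_{k}-\cdots$ in $H$, alternating the image edges $u_{i}w_{j}$ with the matching edges $u_{\ell}w_{\ell}$; this reaches every $u_{\ell}$ and, via the matching edges again, every $w_{\ell}$, so $H$ is connected whenever $G$ is. Apart from this connectivity check, the degree computation and the arithmetic identity $2r+2 \leftrightarrow 2(a+b)$ are the only steps needing attention, the remainder being a verbatim analogue of Corollary \ref{mycorollary1}.
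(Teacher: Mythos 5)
Your proposal is correct and follows essentially the same route as the paper: pass to the auxiliary bipartite graph $H$ via Theorem \ref{mytheorem3}, observe that $H$ is a connected $(r+1)$-regular bipartite graph with $\vert V(H)\vert = 2\vert V(G)\vert \geq 2(2r+2) = 2\bigl((r+1)+(r+1)\bigr)$, and apply Theorem \ref{mytheorem2}. Your explicit verification of the degree count and of the connectivity of $H$ supplies details the paper merely asserts, but the argument is the same.
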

\begin{proof} Let $\alpha$ be an interval total $W_{\tau}(G)$-coloring of the graph $G$.
By Theorem \ref{mytheorem3}, $\beta$ is an interval
$W_{\tau}(G)$-coloring of the graph $H$. Since $H$ is a connected
$(r+1)$-regular bipartite graph with $\vert V(H)\vert \geq 2(2r+2)$
and $H\in \mathfrak{N}$, by Theorem \ref{mytheorem2}, we have
\begin{center}
$W_{\tau}(G)\leq \vert V(H)\vert -3 = 2\vert V(G)\vert-3$, thus
\end{center}
\begin{center}
$W_{\tau}(G)\leq 2\vert V(G)\vert-3$.
\end{center}
$~\square$
\end{proof}

Next we derive some upper bounds for $W_{\tau}(G)$ depending on
degrees and diameter of a connected graph $G$.

\begin{theorem}
\label{mytheorem4}If $G$ is a connected graph and $G\in
\mathfrak{T}$, then
\begin{center}
$W_{\tau}(G)\leq 1+{\max\limits_{P\in \mathbf{P}}
}{\sum\limits_{v\in V(P)} }\ d_{G}(v)$,
\end{center}
where $\mathbf{P}$ is the set of all shortest paths in the graph
$G$.
\end{theorem}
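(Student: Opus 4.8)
The plan is to fix an interval total $W_{\tau}(G)$-coloring $\alpha$ and write $t = W_{\tau}(G)$. Since $S[v,\alpha]$ is an interval of $d_{G}(v)+1$ consecutive colors for every $v \in V(G)$, I would write $S[v,\alpha] = [a_{v}, b_{v}]$ with $b_{v} - a_{v} = d_{G}(v)$. The one local fact I would exploit is that if $e = v_{i-1}v_{i}$ is an edge, then $\alpha(e)$ belongs to both $S[v_{i-1},\alpha]$ and $S[v_{i},\alpha]$, so that $a_{v_{i}} \leq \alpha(e) \leq b_{v_{i-1}}$. This yields the one-step estimate $b_{v_{i}} = a_{v_{i}} + d_{G}(v_{i}) \leq b_{v_{i-1}} + d_{G}(v_{i})$: the top endpoint of the color interval can increase by at most $d_{G}(v_{i})$ each time the path enters a new vertex $v_{i}$.

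Next I would pin down two endpoints. As $\alpha$ uses both colors $1$ and $t$, choose an element colored $1$ and an element colored $t$. If color $1$ lies on a vertex $v_{0}$, then $a_{v_{0}} = 1$; if it lies on an edge incident to a vertex $x$, then $1 \in S[x,\alpha]$ forces $a_{x} = 1$, and I set $v_{0} = x$. Symmetrically, I obtain a vertex $v_{k}$ with $b_{v_{k}} = t$. This is exactly the vertex-versus-edge case distinction already used in the proof of Proposition \ref{myproposition2}, and I expect it to be the only mildly delicate bookkeeping, albeit entirely routine.

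I would then take a shortest path $P = (v_{0}, e_{1}, v_{1}, \ldots, v_{k})$ joining $v_{0}$ to $v_{k}$ and telescope the one-step estimate along it, obtaining $b_{v_{k}} \leq b_{v_{0}} + \sum_{i=1}^{k} d_{G}(v_{i})$. Since $b_{v_{0}} = a_{v_{0}} + d_{G}(v_{0}) = 1 + d_{G}(v_{0})$, this becomes $t = b_{v_{k}} \leq 1 + \sum_{i=0}^{k} d_{G}(v_{i}) = 1 + \sum_{v \in V(P)} d_{G}(v)$.

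Finally, because $P$ was chosen to be a shortest path, we have $P \in \mathbf{P}$, and hence $\sum_{v \in V(P)} d_{G}(v) \leq \max_{P \in \mathbf{P}} \sum_{v \in V(P)} d_{G}(v)$, which gives the claimed bound. It is worth noting that the shortest-path hypothesis plays no role in the telescoping inequality itself, which is valid along any path joining $v_{0}$ to $v_{k}$; it is needed only to guarantee that the particular path we use lies in $\mathbf{P}$, so that its degree sum is dominated by the maximum appearing on the right-hand side.
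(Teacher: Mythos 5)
Your proof is correct and follows essentially the same route as the paper's: both arguments telescope the interval condition $b_{v_{i}}\leq b_{v_{i-1}}+d_{G}(v_{i})$ along a shortest path joining an element colored $1$ to an element colored $t=W_{\tau}(G)$. Your reformulation via the endpoints $a_{v},b_{v}$ of $S[v,\alpha]$ merely compresses the paper's explicit four-case analysis (color $1$ and color $t$ each sitting on a vertex or on an edge) into a single uniform argument.
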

\begin{proof} Consider an interval total $W_{\tau}(G)$-coloring
$\alpha$ of $G$. We distinguish four possible cases.

Case 1: there are vertices $v,v^{\prime}\in V(G)$ such that
$\alpha(v)=1$, $\alpha(v^{\prime})=W_{\tau}(G)$.

Let $P_{1}$ be a shortest path joining $v$ with $v^{\prime}$, where

\begin{center}
$P_{1}=\left( v_{1},e_{1},v_{2},e_{2},\ldots
,v_{i},e_{i},v_{i+1},\ldots ,v_{k},e_{k},v_{k+1}\right)$, $v_{1}=v$,
$v_{k+1}=v^{\prime}$.
\end{center}

Note that

\begin{center}

$\alpha (e_{1})\leq 1+ d_{G}(v_{1})$,

$\alpha (e_{2})\leq \alpha (e_{1})+d_{G}(v_{2})$,

$\cdots \cdots \cdots \cdots \cdots$

$\alpha (e_{i})\leq \alpha (e_{i-1})+d_{G}(v_{i})$,

$\cdots \cdots \cdots \cdots \cdots$

$\alpha (e_{k})\leq \alpha (e_{k-1})+d_{G}(v_{k})$,

$W_{\tau}(G)=\alpha (v^{\prime})=\alpha (v_{k+1}) \leq \alpha
(e_{k})+d_{G}(v_{k+1})$.
\end{center}

By summing these inequalities, we obtain

\begin{center}
$W_{\tau}(G)\leq $ $1+{{\sum\limits_{i=1}^{k+1}d_{G}(v_{i})}}\leq
1+{\max\limits_{P\in \mathbf{P}}}{\sum\limits_{v\in V(P)}}\
d_{G}(v)$.
\end{center}
Case 2: there is a vertex $v$ and there is an edge $e^{\prime}$ such
that $\alpha(v)=1$, $\alpha(e^{\prime})=W_{\tau}(G)$.

Let $e^{\prime}=v^{\prime}w$ and $P_{2}$ be a shortest path joining
$v$ with $v^{\prime}$, where

\begin{center}
$P_{2}=\left( v_{1},e_{1},v_{2},e_{2},\ldots
,v_{i},e_{i},v_{i+1},\ldots ,v_{k},e_{k},v_{k+1}\right)$, $v_{1}=v$,
$v_{k+1}=v^{\prime}$.
\end{center}

Note that

\begin{center}

$\alpha (e_{1})\leq 1+ d_{G}(v_{1})$,

$\alpha (e_{2})\leq \alpha (e_{1})+d_{G}(v_{2})$,

$\cdots \cdots \cdots \cdots \cdots$

$\alpha (e_{i})\leq \alpha (e_{i-1})+d_{G}(v_{i})$,

$\cdots \cdots \cdots \cdots \cdots$

$\alpha (e_{k})\leq \alpha (e_{k-1})+d_{G}(v_{k})$,

$W_{\tau}(G)=\alpha (e^{\prime})=\alpha (v_{k+1}w) \leq \alpha
(e_{k})+d_{G}(v_{k+1})$.
\end{center}

By summing these inequalities, we obtain

\begin{center}
$W_{\tau}(G)\leq $ $1+{{\sum\limits_{i=1}^{k+1}d_{G}(v_{i})}}\leq
1+{\max\limits_{P\in \mathbf{P}}}{\sum\limits_{v\in V(P)}}\
d_{G}(v)$.
\end{center}
Case 3: there is an edge $e$ and there is a vertex $v^{\prime}$ such
that $\alpha(e)=1$, $\alpha(v^{\prime})=W_{\tau}(G)$.

Let $e=uv$ and $P_{3}$ be a shortest path joining $v$ with
$v^{\prime}$, where

\begin{center}
$P_{3}=\left( v_{1},e_{1},v_{2},e_{2},\ldots
,v_{i},e_{i},v_{i+1},\ldots ,v_{k},e_{k},v_{k+1}\right)$, $v_{1}=v$,
$v_{k+1}=v^{\prime}$.
\end{center}

Note that

\begin{center}

$\alpha (e_{1})\leq 1+ d_{G}(v_{1})$,

$\alpha (e_{2})\leq \alpha (e_{1})+d_{G}(v_{2})$,

$\cdots \cdots \cdots \cdots \cdots$

$\alpha (e_{i})\leq \alpha (e_{i-1})+d_{G}(v_{i})$,

$\cdots \cdots \cdots \cdots \cdots$

$\alpha (e_{k})\leq \alpha (e_{k-1})+d_{G}(v_{k})$,

$W_{\tau}(G)=\alpha (v^{\prime})=\alpha (v_{k+1}) \leq \alpha
(e_{k})+d_{G}(v_{k+1})$.
\end{center}

By summing these inequalities, we obtain

\begin{center}
$W_{\tau}(G)\leq $ $1+{{\sum\limits_{i=1}^{k+1}d_{G}(v_{i})}}\leq
1+{\max\limits_{P\in \mathbf{P}}}{\sum\limits_{v\in V(P)}}\
d_{G}(v)$.
\end{center}
Case 4: there are edges $e,e^{\prime}\in E(G)$ such that
$\alpha(e)=1$, $\alpha(e^{\prime})=W_{\tau}(G)$.

Let $e=uv$ and $e^{\prime}=v^{\prime}w$. Without loss of generality
we may assume that a shortest path $P_{4}$ joining $e$ and
$e^{\prime}$ joins $v$ and $v^{\prime}$, where

\begin{center}
$P_{4}=\left( v_{1},e_{1},v_{2},e_{2},\ldots
,v_{i},e_{i},v_{i+1},\ldots ,v_{k},e_{k},v_{k+1}\right)$, $v_{1}=v$,
$v_{k+1}=v^{\prime}$.
\end{center}

Note that

\begin{center}

$\alpha (e_{1})\leq 1+ d_{G}(v_{1})$,

$\alpha (e_{2})\leq \alpha (e_{1})+d_{G}(v_{2})$,

$\cdots \cdots \cdots \cdots \cdots$

$\alpha (e_{i})\leq \alpha (e_{i-1})+d_{G}(v_{i})$,

$\cdots \cdots \cdots \cdots \cdots$

$\alpha (e_{k})\leq \alpha (e_{k-1})+d_{G}(v_{k})$,

$W_{\tau}(G)=\alpha (e^{\prime})=\alpha (v^{\prime}w)\leq \alpha
(e_{k})+d_{G}(v_{k+1})$.
\end{center}

By summing these inequalities, we obtain

\begin{center}
$W_{\tau}(G)\leq $ $1+{{\sum\limits_{i=1}^{k+1}d_{G}(v_{i})}}\leq
1+{\max\limits_{P\in \mathbf{P}}}{\sum\limits_{v\in V(P)}}\
d_{G}(v)$.
\end{center}
$~\square$
\end{proof}

\begin{corollary}
\label{mycorollary3} If $G$ is a connected graph and $G\in
\mathfrak{T}$, then $W_{\tau}(G)\leq 1+(diam(G)+1)\Delta(G)$.
\end{corollary}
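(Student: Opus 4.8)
The plan is to derive this corollary directly from Theorem~\ref{mytheorem4}, which already gives $W_{\tau}(G)\leq 1+\max_{P\in\mathbf{P}}\sum_{v\in V(P)}d_{G}(v)$. Since the hard analytic work (the telescoping-sum argument over a shortest path) is done there, all that remains is to bound the right-hand side by the two crudest available estimates: replace every individual vertex degree by the maximum degree $\Delta(G)$, and bound the number of vertices on a shortest path in terms of $diam(G)$.

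First I would fix a shortest path $P\in\mathbf{P}$ that attains the maximum in Theorem~\ref{mytheorem4}. Because $d_{G}(v)\leq\Delta(G)$ for every $v\in V(G)$, the sum of degrees along $P$ satisfies $\sum_{v\in V(P)}d_{G}(v)\leq\vert V(P)\vert\cdot\Delta(G)$. Next I would control $\vert V(P)\vert$. Any shortest path in $G$ joins two vertices at distance equal to its length (measured in edges), and this distance is at most $diam(G)$ by definition of the diameter. A path with $\ell$ edges has exactly $\ell+1$ vertices, so $\vert V(P)\vert\leq diam(G)+1$. Combining these two estimates gives $\sum_{v\in V(P)}d_{G}(v)\leq(diam(G)+1)\Delta(G)$, and substituting back into the bound from Theorem~\ref{mytheorem4} yields $W_{\tau}(G)\leq 1+(diam(G)+1)\Delta(G)$, as required.

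I do not expect any genuine obstacle: this is a clean weakening of Theorem~\ref{mytheorem4}. The only point requiring a moment's care is the bookkeeping between the \emph{length} of a shortest path, counted in edges and bounded by $diam(G)$, and its \emph{number of vertices}, which exceeds the length by one and therefore contributes the factor $diam(G)+1$ rather than $diam(G)$.
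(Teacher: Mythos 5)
Your proof is correct and is exactly the argument the paper intends: the corollary is stated without proof as an immediate consequence of Theorem~\ref{mytheorem4}, obtained by bounding each $d_G(v)$ by $\Delta(G)$ and the number of vertices on a shortest path by $diam(G)+1$. Your explicit care with the edge-count versus vertex-count bookkeeping is the only nontrivial point, and you handle it correctly.
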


Now we give an upper bound on $W_{\tau}(G)$ for graphs with a unique
universal vertex.

\begin{theorem}
\label{mytheorem5} If $G$ is a graph with a unique universal vertex
$u$ and $G\in \mathfrak{T}$, then $W_{\tau}(G)\leq \vert V(G)\vert +
2k(G)$, where $k(G)={\max }_{v\in V(G)(v\neq u)} d_{G}(v)$.
\end{theorem}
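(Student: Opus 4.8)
The plan is to exploit the fact that the universal vertex $u$ pins down a block of $\vert V(G)\vert$ consecutive colors, and that every other vertex is forced to lie close to this block because it is adjacent to $u$. Throughout I would fix an interval total $W_{\tau}(G)$-coloring $\alpha$ of $G$ and write $n=\vert V(G)\vert$ and $k=k(G)$. Since $u$ is universal, $d_{G}(u)=n-1$, so $S[u,\alpha]$ consists of $d_{G}(u)+1=n$ consecutive colors; say $S[u,\alpha]=[a,a+n-1]$.

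First I would show that every color used by $\alpha$ is confined to the interval $[a-k,\,a+n-1+k]$. Take any vertex $v\neq u$. Because $u$ is universal, the edge $uv$ exists and $\alpha(uv)\in S[u,\alpha]=[a,a+n-1]$; at the same time $\alpha(uv)\in S[v,\alpha]$, which is an interval of exactly $d_{G}(v)+1\leq k+1$ consecutive colors. Writing $S[v,\alpha]=[c,c+d_{G}(v)]$ and using $a\leq\alpha(uv)\leq a+n-1$ together with $c\leq\alpha(uv)\leq c+d_{G}(v)$, I would deduce $\min S[v,\alpha]=c\geq a-d_{G}(v)\geq a-k$ and $\max S[v,\alpha]=c+d_{G}(v)\leq (a+n-1)+k$. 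Hence $S[v,\alpha]\subseteq[a-k,\,a+n-1+k]$, and trivially $S[u,\alpha]\subseteq[a-k,\,a+n-1+k]$ as well.

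Next I would check that these sets account for all colors. Every vertex color $\alpha(x)$ lies in $S[x,\alpha]$, and every edge color $\alpha(xy)$ lies in $S[x,\alpha]$ since $xy$ is incident to $x$; thus each color used by $\alpha$ belongs to some $S[x,\alpha]$, and by the previous step to $[a-k,\,a+n-1+k]$. This interval contains exactly $(a+n-1+k)-(a-k)+1=n+2k$ integers. Since $\alpha$ is an interval total $W_{\tau}(G)$-coloring, every color $1,2,\ldots,W_{\tau}(G)$ actually appears, so the $W_{\tau}(G)$ consecutive integers $\{1,\ldots,W_{\tau}(G)\}$ all lie in a set of $n+2k$ integers, forcing $W_{\tau}(G)\leq n+2k=\vert V(G)\vert+2k(G)$.

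I expect the only delicate point to be the two-sided containment in the second step, namely verifying that $S[v,\alpha]$ can protrude by at most $k$ past each end of $S[u,\alpha]$; this is where the degree bound $d_{G}(v)\leq k(G)$ is used and where the factor $2k(G)$ originates. Note that uniqueness of $u$ is not needed for the containment argument itself; it only serves to make $u$, and hence $k(G)=\max_{v\neq u}d_{G}(v)$, well defined (and guarantees $k(G)\leq n-2$, so the bound is informative). In fact this streamlined argument avoids the four-case split used in Theorem~\ref{mytheorem4}, since here a single distinguished interval $S[u,\alpha]$ dominates every other vertex's interval.
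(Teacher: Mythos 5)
Your proof is correct and follows essentially the same approach as the paper: both arguments rest on the observation that $S[u,\alpha]$ is a block of $\vert V(G)\vert$ consecutive colors and that, since every other vertex $v$ is adjacent to $u$, its interval $S[v,\alpha]$ (of length at most $k(G)+1$) can protrude past that block by at most $k(G)$ on either side. The only cosmetic difference is bookkeeping — the paper anchors the argument by first showing $\min S[u,\alpha]\leq k(G)+1$ (via the fact that color $1$ must appear), whereas you count the size of the containing interval directly; both yield the same bound.
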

\begin{proof} Let $\alpha$ be an interval total
$W_{\tau}(G)$-coloring of the graph $G$.

Consider the vertex $u$. We show that $1\leq \min S[u,\alpha]\leq
k(G)+1$.

Suppose, to the contrary, that $\min S[u,\alpha]\geq k(G)+2$. Since
$d_{G}(v)\leq k(G)$ for any $v\in V(G)(v\neq u)$, then $\min
S[v,\alpha]\geq 2$ for any $v\in V(G)(v\neq u)$, which is a
contradiction.

Now, we have

\begin{center}
$1\leq \min S[u,\alpha]\leq k(G)+1$,
\end{center}
hence,
\begin{center}
$\vert V(G)\vert \leq \max S[u,\alpha]\leq \vert V(G)\vert + k(G)$.
\end{center}
This implies that $\max S[v,\alpha]\leq \vert V(G)\vert + 2k(G)$ for
any $v\in V(G)(v\neq u)$. $~\square$
\end{proof}

In the next theorem we prove that regular bipartite graphs, trees
and complete bipartite graphs are interval total colorable.

\begin{theorem}
\label{mytheorem6} The set $\mathfrak{T}$ contains all regular
bipartite graphs, trees and complete bipartite graphs.
\end{theorem}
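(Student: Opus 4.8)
The plan is to prove Theorem~\ref{mytheorem6} by exhibiting an explicit interval total coloring for each of the three families separately, since each admits a natural highly-structured proper edge-coloring that we can extend to the vertices.

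\textbf{Regular bipartite graphs.} First I would recall that every $r$-regular bipartite graph $G$ has a proper edge-coloring with exactly $r$ colors (by König's edge-coloring theorem, since $\chi'(G)=\Delta(G)=r$ for bipartite graphs), and that such a coloring is automatically an interval edge-coloring because each vertex sees all $r$ colors $1,2,\ldots,r$, i.e. the consecutive block $[1,r]$. The idea is to start from such an $r$-edge-coloring $\varphi$ with colors in $[1,r]$ and extend it to a total coloring by assigning vertex colors. Since $G$ is bipartite with parts $X$ and $Y$, I would color every vertex in $X$ with the color $0$ (equivalently shift all edge colors up by one and give $X$-vertices the color $1$) and every vertex in $Y$ with the color $r+1$. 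Concretely, set $\alpha(e)=\varphi(e)+1$ for each edge, $\alpha(x)=1$ for $x\in X$, and $\alpha(y)=r+2$ for $y\in Y$. Then at each $x\in X$ the colors are $\{1\}\cup[2,r+1]=[1,r+1]$, and at each $y\in Y$ they are $[2,r+1]\cup\{r+2\}=[2,r+2]$; both are intervals of length $r+1=d_G(v)+1$, and colors $1,\ldots,r+2$ all appear. This gives an interval total $(r+2)$-coloring, so $G\in\mathfrak{T}$.

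\textbf{Trees.} For a tree $T$ I would argue by induction on $|V(T)|$, or alternatively give a direct level-based coloring. The clean approach is induction: a single vertex or edge is trivially colorable, and for the inductive step one removes a leaf $v$ adjacent to $u$, interval-total-colors the smaller tree, and then extends the coloring to the pendant edge $uv$ and the leaf $v$. The key observation is that at $u$ the used colors form an interval $S[u,\alpha]=[a,b]$ of length $d_{T-v}(u)+1$; adding back $uv$ increases $d(u)$ by one, so I may recolor slightly if needed so that one of $a-1$ or $b+1$ is available globally, assign the new edge $uv$ that color, and then color the leaf $v$ with an adjacent color to keep $S[v,\alpha]$ an interval of length $2$. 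I expect the bookkeeping here—ensuring the block at $u$ can be extended on one side and that the resulting coloring still uses a contiguous range of colors overall—to be the most delicate part, though for trees this is standard and follows from the freedom that leaves impose essentially no constraints.

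\textbf{Complete bipartite graphs.} For $K_{m,n}$ with parts of sizes $m$ and $n$, it is well known that $K_{m,n}\in\mathfrak{N}$ and in fact one can write down an explicit interval edge-coloring: labeling the two sides $x_1,\ldots,x_m$ and $y_1,\ldots,y_n$, the assignment $\varphi(x_iy_j)=i+j-1$ gives a proper interval edge-coloring in which the edges at $x_i$ use $[i,\,i+n-1]$ and those at $y_j$ use $[j,\,j+m-1]$. I would then extend to vertices by choosing $\alpha(x_i)=i-1$ and $\alpha(y_j)=j+m$ (after an overall shift of edge colors by one to make room), so that the vertex color sits immediately below or above its edge-block and the combined set at each vertex is again a consecutive interval of the correct length. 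The main obstacle across all three cases is the same structural point: one must extend a given interval edge-coloring to the vertices without destroying the consecutiveness at any vertex, which requires placing each vertex color adjacent to its incident edge-block; the bipartite structure makes this possible by consistently pushing one side down and the other side up, and it is this side-splitting that I expect to be the crux of the argument.
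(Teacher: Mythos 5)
Your approach coincides with the paper's on all three families, and two of the three constructions are essentially verbatim the paper's: for an $r$-regular bipartite graph the paper also takes a K\H{o}nig proper $r$-edge-coloring on colors $2,\ldots,r+1$ and colors one side $1$ and the other $r+2$, and for $K_{m,n}$ the paper's coloring $\gamma(u_i)=i$, $\gamma(v_j)=m+1+j$, $\gamma(u_iv_j)=i+j$ is exactly your shifted $i+j-1$ scheme with the vertex colors placed just below one edge-block and just above the other. Those two parts are correct and complete.

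The tree case is where your write-up stops short of a proof, and you have correctly identified the spot. As stated, ``one of $a-1$ or $b+1$ is available'' is not justified if you intend (as the paper does) to stay within $\Delta(T)+2$ colors: writing $S[u,\alpha]=[a,b]$ with $b=a+d_T(u)-1$ after reattaching the leaf's edge, the downward extension needs $a\geq 3$ (the leaf must get color $a-2\geq 1$) and the upward extension needs $b+2\leq\Delta(T)+2$, and both fail simultaneously when $a=2$ and $d_T(u)=\Delta(T)$. The paper resolves exactly this situation by a case split on $s(1)=\min S[u,\alpha]$: if the vertex $u$ itself carries color $2$ the top of the block is low enough to extend upward; if not but $\Delta(T')=\Delta(T)$ one can give the pendant edge color $1$ and the leaf color $2$; and if $\Delta(T')<\Delta(T)$ one first shifts the whole coloring of $T'$ up by one. ``Recolor slightly'' does not substitute for this analysis. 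That said, for the theorem as literally stated (membership in $\mathfrak{T}$, with no bound on the number of colors) your induction closes much more cheaply than you fear: you may \emph{always} extend upward, giving the pendant edge color $b+1$ and the leaf color $b+2$, since no upper ceiling is imposed; the only remaining worry, that some color between $1$ and the new maximum might be unused, is dispatched by the paper's Proposition~\ref{myproposition2} for connected graphs. So your plan is repairable in one line for the qualitative statement; the paper's extra casework is what buys the quantitative bound $w_{\tau}(T)\leq\Delta(T)+2$ recorded in Corollary~\ref{mycorollary5}.
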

\begin{proof} First we prove that if $G$ is an $r$-regular bipartite
graph with bipartition $(U,V)$, then $G$ has an interval total
$(r+2)$-coloring.

Since $G$ is an $r$-regular bipartite graph, we have $\chi^{\prime
}\left(G\right)=\Delta(G)=r$. Let $\alpha$ be a proper edge-coloring
of $G$ with colors $2,3,\ldots,r+1$. Clearly, $S(w,\alpha)=[2,r+1]$
for each $w\in V(G)$.

Define a total coloring $\beta$ of the graph $G$ as follows:

1. for any $u\in U$, let $\beta(u)=1$;

2. for any $e\in E(G)$, let $\beta(e)=\alpha(e)$;

3. for any $v\in V$, let $\beta(v)=r+2$.

It is easy to see that $\beta$ is an interval total $(r+2)$-coloring
of $G$.

Next we consider trees. Clearly, $K_{1}$ is a tree and has an
interval total $1$-coloring. Assume that $T$ is a tree and $T\neq
K_{1}$. Now we prove that $T$ has an interval total
$(\Delta(T)+2)$-coloring.

We use induction on $\vert E(T)\vert$. Clearly, the statement is
true for the case $\vert E(T)\vert=1$. Suppose that $\vert
E(T)\vert=k>1$ and the statement is true for all trees $T^{\prime}$
with $\vert E(T^{\prime})\vert<k$.

Suppose $e=uv\in E(T)$ and $d_{T}(u)=1$. Let $T^{\prime}=T-u$. Since
$\vert E(T)\vert>1$, we have $d_{T}(v)\geq 2$. Clearly,
$d_{T^{\prime}}(v)=d_{T}(v)-1, \Delta(T^{\prime})\leq \Delta(T)$ and
$\vert E(T^{\prime})\vert=\vert E(T)\vert-1<k$. Let $\alpha$ be an
interval total $(\Delta(T^{\prime})+2)$-coloring of the tree
$T^{\prime}$ (by induction hypothesis). Consider the vertex $v$. Let
\begin{center}
$S[v,\alpha]=\{s(1),s(2),\ldots,s(d_{T^{\prime}}(v)+1)\}$,
\end{center}
where $1\leq s(1)<s(2)<\ldots<s(d_{T^{\prime}}(v)+1)\leq
\Delta(T)+2$. We consider three cases.

Case 1: $s(1)=1$.

Clearly, $s(d_{T^{\prime}}(v)+1)=d_{T^{\prime}}(v)+1=d_{T}(v)$. In
this case we color the edge $e$ with color $d_{T}(v)+1$ and the
vertex $u$ with color $d_{T}(v)+2$. It is easy to see that the
obtained coloring is an interval total $(\Delta(T)+2)$-coloring of
the tree $T$.

Case 2: $s(1)=2$.

Subcase 2.1: $\alpha(v)=2$.

Clearly, $s(d_{T^{\prime}}(v)+1)=d_{T}(v)+1$. In this case we color
the edge $e$ with color $d_{T}(v)+2$ and the vertex $u$ with color
$d_{T}(v)+1$. It is easy to see that the obtained coloring is an
interval total $(\Delta(T)+2)$-coloring of the tree $T$.

Subcase 2.2: $\alpha(v)\neq 2$ and $\Delta (T^{\prime})=\Delta (T)$.

We color the edge $e$ with color $1$ and the vertex $u$ with color
$2$. It is easy to see that obtained coloring is an interval total
$(\Delta(T)+2)$-coloring of the tree $T$.

Subcase 2.3: $\alpha(v)\neq 2$ and $\Delta (T^{\prime})<\Delta (T)$.

We define a total coloring $\beta$ of the tree $T^{\prime}$ in the
following way:

1. $\forall w\in V(T^{\prime})$   $\beta(w)=\alpha(w)+1$;

2. $\forall e^{\prime}\in E(T^{\prime})$
$\beta(e^{\prime})=\alpha(e^{\prime})+1$.

Now we color the edge $e$ with color $2$ and the vertex $u$ with
color $1$. It is not difficult to see that the obtained coloring is
an interval total $(\Delta (T)+2)$-coloring of the tree $T$.

Case 3: $s(1)\geq 3$.

We color the edge $e$ with color $s(1)-1$ and the vertex $u$ with
color $s(1)-2$. It is easy to see that the obtained coloring is an
interval total $(\Delta (T)+2)$-coloring of the tree $T$.

Finally, we prove that if $K_{m,n}$ is a complete bipartite graph,
then it has an interval total $(m+n+1)$-coloring.

Let
$V(K_{m,n})=\{u_{1},u_{2},\ldots,u_{m},v_{1},v_{2},\ldots,v_{n}\}$
and $E(K_{m,n})=\{u_{i}v_{j}|~1\leq i\leq m,1\leq j\leq n\}$.

Define a total coloring $\gamma$ of the graph $K_{m,n}$ as follows:

1. for $i=1,2,\ldots,m$, let $\gamma(u_{i})=i$;

2. for $j=1,2,\ldots,n$, let $\gamma(v_{j})=m+1+j$;

3. for $i=1,2,\ldots,m$ and $j=1,2,\ldots,n$, let
$\gamma(u_{i}v_{j})=i+j$.

It is easy to see that $\gamma$ is an interval total
$(m+n+1)$-coloring of $K_{m,n}$. $~\square$
\end{proof}

\begin{corollary}
\label{mycorollary4} If $G$ is an $r$-regular bipartite graph, then
$w_{\tau}(G)\leq r+2$.
\end{corollary}

\begin{corollary}
\label{mycorollary5} If $T$ is a tree, then $w_{\tau}(T)\leq
\Delta(T)+2$.
\end{corollary}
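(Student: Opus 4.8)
The plan is to read this corollary off directly from Theorem~\ref{mytheorem6}, whose proof for the tree case does more than assert membership in $\mathfrak{T}$: it explicitly exhibits, by induction on $\vert E(T)\vert$, an interval total $(\Delta(T)+2)$-coloring of every tree $T\neq K_1$. Since $w_{\tau}(T)$ is by definition the least value of $t$ for which $T\in \mathfrak{T}_t$, producing any single interval total $(\Delta(T)+2)$-coloring is enough to bound $w_{\tau}(T)$ from above by $\Delta(T)+2$. So there is no new construction to perform; the work has already been done inside Theorem~\ref{mytheorem6}.

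First I would dispose of the degenerate case $T=K_1$. Here the single vertex receives color $1$, giving an interval total $1$-coloring, so $w_{\tau}(K_1)=1$; and since $\Delta(K_1)+2=2$, the inequality $w_{\tau}(K_1)\leq \Delta(K_1)+2$ holds (with slack). Then for $T\neq K_1$, I would simply cite the tree part of Theorem~\ref{mytheorem6}: it yields $T\in \mathfrak{T}_{\Delta(T)+2}$, whence $w_{\tau}(T)\leq \Delta(T)+2$ by the definition of $w_{\tau}$.

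The only point requiring the slightest care is the edge case $K_1$, where the formula $\Delta(T)+2$ overestimates the true value $w_{\tau}(K_1)=1$; otherwise there is no obstacle at all, as the inductive coloring scheme of Theorem~\ref{mytheorem6} supplies the required coloring outright. In short, Corollary~\ref{mycorollary5} is an immediate restatement of the existence-with-bound already established for trees, and the proof is a one-line appeal to Theorem~\ref{mytheorem6} together with the definition of $w_{\tau}$.
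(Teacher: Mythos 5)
Your proposal is correct and matches the paper's (implicit) argument exactly: the corollary is stated as an immediate consequence of Theorem~\ref{mytheorem6}, whose tree case constructs an interval total $(\Delta(T)+2)$-coloring for $T\neq K_1$ and a $1$-coloring for $K_1$, so $w_{\tau}(T)\leq\Delta(T)+2$ follows from the definition of $w_{\tau}$. Your explicit handling of the $K_1$ edge case is a small but welcome addition.
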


\begin{corollary}
\label{mycorollary6} $W_{\tau}(K_{m,n})\geq m+n+1$ for any $m,n\in
\mathbf{N}$.
\end{corollary}

From Corollary \ref{mycorollary4}, we have that $w_{\tau}(G)\leq
r+2$ for any $r$-regular bipartite graph $G$. On the other hand,
clearly, $w_{\tau}(G)\geq r+1$. In \cite{b14,b19} it was proved that
the problem of determining whether $\chi^{\prime
\prime}\left(G\right)=r+1$ is $NP$-complete even for cubic bipartite
graphs. Thus, we can conclude that the verification whether
$w_{\tau}(G)=r+1$ for any $r$-regular ($r\geq 3$) bipartite graph
$G$ is also $NP$-complete.
\bigskip

\section{Exact values of $w_{\tau}(G)$ and $W_{\tau}(G)$}\

In this section we determine the exact values of $w_{\tau}(G)$ and
$W_{\tau}(G)$ for simple cycles, complete graphs and wheels.

In \cite{b25} it was proved the following result.

\begin{theorem}
\label{mytheorem7} For the simple cycle $C_{n}$,
\begin{center}
$\chi^{\prime\prime}(C_{n})=\left\{
\begin{tabular}{ll}
$3$, & if $n=3k$, \\
$4$, & if $n\neq 3k$. \\
\end{tabular}%
\right.$
\end{center}
\end{theorem}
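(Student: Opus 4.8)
The plan is to establish the two bounds $\chi''(C_n)\ge 3$ and $\chi''(C_n)\le 4$ first, and then pin down exactly when three colors suffice. Label the vertices $v_1,\dots,v_n$ and set $e_i=v_iv_{i+1}$ (indices modulo $n$). Since $v_i$ together with its two incident edges $e_{i-1},e_i$ are pairwise adjacent or incident, they require three distinct colors, so $\chi''(C_n)\ge \Delta(C_n)+1=3$; and since the Total Coloring Conjecture holds for cycles, $\chi''(C_n)\le \Delta(C_n)+2=4$ (an explicit proper total $4$-coloring is easy to exhibit, e.g. for even $n$ by coloring the edges $1,2,1,2,\dots$ and the vertices $3,4,3,4,\dots$, and for odd $n$ by a similar construction with a small local modification). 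Thus $\chi''(C_n)\in\{3,4\}$ for every $n\ge 3$, and the whole problem reduces to deciding when the value $3$ is attainable.

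The key observation is that a total $3$-coloring is extremely rigid. In any total coloring with colors $\{1,2,3\}$ the three elements $v_i,e_{i-1},e_i$ must use all three colors, so the vertex color $c(v_i)$ is forced to be the unique color missing from $\{c(e_{i-1}),c(e_i)\}$. A one-line check then shows that the last remaining constraint, $c(v_i)\ne c(v_{i+1})$ for adjacent vertices, is equivalent to $c(e_{i-1})\ne c(e_{i+1})$: both $c(v_i)$ and $c(v_{i+1})$ exclude $c(e_i)$, so they coincide precisely when $c(e_{i-1})=c(e_{i+1})$. Hence a total $3$-coloring of $C_n$ exists if and only if the cyclic edge sequence $c(e_1),\dots,c(e_n)$ over $\{1,2,3\}$ has the property that every three consecutive entries $c(e_{i-1}),c(e_i),c(e_{i+1})$ are pairwise distinct.

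Such a cyclic sequence is forced to be periodic with period $3$: once $c(e_1),c(e_2)$ are chosen distinct, $c(e_3)$ is determined as the third color, $c(e_4)$ must avoid $c(e_2),c(e_3)$ and hence equals $c(e_1)$, and so on, giving the pattern $1,2,3,1,2,3,\dots$ up to relabeling. This pattern closes up consistently around the cycle if and only if $3\mid n$. Consequently, for $n=3k$ one obtains an explicit total $3$-coloring (edges colored $1,2,3$ periodically, each vertex taking the missing color), so $\chi''(C_n)=3$; while for $n\ne 3k$ no total $3$-coloring exists, forcing $\chi''(C_n)=4$.

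The routine parts are the two initial bounds and the explicit $4$-coloring; the substantive step, and the one I would treat most carefully, is the rigidity argument: correctly reducing the vertex-adjacency condition to the distance-two edge condition and then deducing the forced period-$3$ structure whose cyclic consistency is exactly the divisibility $3\mid n$. The main obstacle is ensuring that the reduction and the closing-up argument handle the cyclic wrap-around (the edges $e_n,e_1$ and the vertices $v_n,v_1$) without a hidden gap.
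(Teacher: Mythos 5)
Your argument is correct, but note that the paper does not prove this statement at all: Theorem~\ref{mytheorem7} is quoted from Yap's monograph \cite{b25} and used as a black box (its only role here is to supply the lower bound $w_{\tau}(C_n)\ge\chi''(C_n)$ in Theorem~\ref{mytheorem8}). So there is no proof in the paper to compare against, and what you have written is a self-contained replacement for the citation. The substantive part --- the rigidity reduction --- is sound: in a total $3$-coloring the triple $v_i,e_{i-1},e_i$ is mutually adjacent/incident, so $c(v_i)$ is forced to be the color missing from $\{c(e_{i-1}),c(e_i)\}$, and since both $c(v_i)$ and $c(v_{i+1})$ lie in the two-element set $\{1,2,3\}\setminus\{c(e_i)\}$, the condition $c(v_i)\ne c(v_{i+1})$ is indeed equivalent to $c(e_{i-1})\ne c(e_{i+1})$; the forced period-$3$ edge pattern then closes up cyclically iff $3\mid n$ (for $3\nmid n$, $3$-periodicity together with $n$-periodicity would force a constant edge coloring, which is impossible). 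The only loose end is the upper bound $\chi''(C_n)\le 4$ for odd $n$, which you gesture at rather than exhibit; this is genuinely routine (and, incidentally, the explicit interval total $4$-colorings constructed in the proof of Theorem~\ref{mytheorem8} for $n\ne 3k$ already supply it), but if you intend this as a complete proof you should write the odd-$n$ construction out.
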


\begin{theorem}
\label{mytheorem8} For any $n\geq 3$, we have
\begin{description}
\item[(1)] $C_{n}\in \mathfrak{T}$,

\item[(2)] $w_{\tau}(C_{n})=\left\{
\begin{tabular}{ll}
$3$, & if $n=3k$, \\
$4$, & if $n\neq 3k$,\\
\end{tabular}%
\right.$

\item[(3)] $W_{\tau}(C_{n})=n+2$.
\end{description}
\end{theorem}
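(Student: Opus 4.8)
The plan is to prove the three assertions in turn, getting (1) for free from the colorings constructed for (2) and (3).

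For (2) the lower bounds are immediate: the inequality $\chi''(G)\le w_\tau(G)$ recorded in Section 2, together with Theorem \ref{mytheorem7}, gives $w_\tau(C_n)\ge 3$ when $n=3k$ and $w_\tau(C_n)\ge 4$ otherwise. For the matching upper bounds I would exhibit explicit colorings. When $n=3k$, any proper total coloring with colors $\{1,2,3\}$ is automatically interval, since each vertex has degree $2$ and its window of three consecutive colors can only be $[1,3]$; the period-$3$ pattern assigning vertices the repeating colors $1,2,3$ and each edge the color absent from its two endpoints closes up because $3\mid n$, so $w_\tau(C_{3k})\le 3$. When $n\neq 3k$ I would build an interval total $4$-coloring, in which every window is $[1,3]$ or $[2,4]$: for even $n$ one alternates vertex colors $1,4,1,4,\dots$ and edge colors $2,3,2,3,\dots$, while for odd $n\neq 3k$ one runs the period-$3$ pattern and patches a short segment with a single use of color $4$. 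These give $w_\tau(C_n)\le 4$ and incidentally show $C_n\in\mathfrak{T}$.

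For the upper bound in (3), fix an interval total $t$-coloring $\alpha$ with $t=W_\tau(C_n)$ and write $S[v_i,\alpha]=[a_i,a_i+2]$. Since consecutive windows share the color of the edge between them, $|a_{i+1}-a_i|\le 2$ cyclically; since every color occurs, $\min_i a_i=1$, and as $t$ lies in some window, $\max_i a_i=t-2$. Going once around the cycle, the total variation $\sum_i|a_{i+1}-a_i|$ is at least $2(\max_i a_i-\min_i a_i)=2(t-3)$ and at most $2n$, whence $t\le n+3$. To remove the last unit I would rule out $t=n+3$: it forces $\sum_i|a_{i+1}-a_i|=2n$ over $n$ steps, so every step has absolute value $2$; then the vertex $v_j$ with $a_j=\max_i a_i$ has both neighbours with window-minimum $a_j-2$, so each of its two incident edges is forced into the singleton $[a_j,a_j+2]\cap[a_j-2,a_j]=\{a_j\}$, i.e. two adjacent edges receive the same color, a contradiction. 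Hence $t\le n+2$. (For odd $n$ one could instead just invoke Corollary \ref{mycorollary3}, which already yields $n+2$; the peak argument is what is needed for even $n$.)

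For the lower bound in (3) I would display an interval total $(n+2)$-coloring. Reading the $2n$ elements cyclically as $v_1,e_1,v_2,e_2,\dots,v_n,e_n$, I assign colors that ascend from $1$ to the peak $n+2$ and then descend back toward $1$, placing both the peak value $n+2$ and the valley value $1$ on vertices; the two unavoidable $\pm2$ jumps occur exactly at these extreme vertices and are absorbed because the third element of each extreme window fills the gap (for $C_5$, say, $\alpha(v)=(1,3,5,7,4)$ and $\alpha(e)=(2,4,6,5,3)$). One then checks routinely that the coloring is proper, that each vertex window is an interval of length $3$, and that all colors $1,\dots,n+2$ appear, giving $W_\tau(C_n)\ge n+2$ and hence equality. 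I expect the crux to be the upper bound in (3): the easy degree/diameter estimates stop at $n+3$, and only the observation that a doubly-steep peak in the sequence of window-minima forces two incident edges to collide rules out the extremal value; the explicit $(n+2)$-coloring is elementary but needs slightly different bookkeeping according to the parity of $n$, since for even $n$ the peak value lands on a vertex only after a $+2$ step.
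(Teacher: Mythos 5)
Your proposal is correct, and for parts (1), (2) and the lower bound in (3) it follows essentially the same route as the paper: the lower bounds in (2) come from $\chi''(C_n)\le w_\tau(C_n)$ and Theorem \ref{mytheorem7}, and the upper bounds come from explicit period-$3$ / alternating / patched colorings, just as in the paper's three cases; likewise your ascending-then-descending $(n+2)$-coloring is the same zig-zag construction the paper writes out in two parity cases. The genuine difference is the upper bound $W_\tau(C_n)\le n+2$. The paper simply invokes Corollary \ref{mycorollary3} with $diam(C_n)=\lfloor n/2\rfloor$ and $\Delta(C_n)=2$ and asserts the bound is "easy to show"; as you correctly observe, that corollary yields $n+2$ only for odd $n$ and gives $n+3$ when $n$ is even, so the paper's argument as written leaves a gap for even $n$. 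Your replacement --- writing $S[v_i,\alpha]=[a_i,a_i+2]$, noting $|a_{i+1}-a_i|\le 2$ since consecutive windows share the color of the connecting edge, bounding the cyclic total variation between $2(t-3)$ and $2n$, and then killing the extremal case $t=n+3$ because a global peak $v_j$ would force both of its incident edges into the singleton $[a_j,a_j+2]\cap[a_j-2,a_j]=\{a_j\}$ --- is complete, handles both parities uniformly, and is arguably a cleaner and more self-contained proof of this step than the one in the paper. The only part left at sketch level is the general bookkeeping for the $(n+2)$-coloring and the odd $n\ne 3k$ patch, but the $C_5$ instance you check and the description of the peak/valley placement make it clear the construction goes through.
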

\begin{proof} First we prove that $C_{n}$ has either an interval total $3$-coloring or an interval total $4$-coloring.

Let $V(C_{n})=\{v_{1},v_{2}\ldots,v_{n}\}$ and
$E(C_{n})=\{v_{i}v_{i+1}|~1\leq i\leq n-1\}\cup \{v_{1}v_{n}\}$. We
consider three cases.

Case 1: $n=3k$ ($k\in \mathbf{N}$).

Define a total coloring $\alpha$ of the graph $C_{n}$ as follows:

\begin{center}
$\alpha \left(v_{i}\right) =\left\{
\begin{tabular}{ll}
$2$, & if $i\equiv 0 \pmod{3}$, \\
$1$, & if $i\equiv 1 \pmod{3}$, \\
$3$, & if $i\equiv 2 \pmod{3}$, \\
\end{tabular}%
\right.$
\end{center}
for $i=1,2,\ldots,n$,

\begin{center}
$\alpha \left(v_{j}v_{j+1}\right) =\left\{
\begin{tabular}{ll}
$3$, & if $j\equiv 0 \pmod{3}$, \\
$2$, & if $j\equiv 1 \pmod{3}$, \\
$1$, & if $j\equiv 2 \pmod{3}$, \\
\end{tabular}%
\right.$
\end{center}
for $j=1,2,\ldots,n-1$, and $\alpha(v_{1}v_{n})=3$.

Case 2: $n\neq 3k$ ($k\in \mathbf{N}$) and $n$ is even.

Define a total coloring $\alpha$ of the graph $C_{n}$ as follows:

\begin{center}
$\alpha \left(v_{i}\right) =\left\{
\begin{tabular}{ll}
$4$, & if $i\equiv 0 \pmod{2}$, \\
$1$, & if $i\equiv 1 \pmod{2}$, \\
\end{tabular}%
\right.$
\end{center}
for $i=1,2,\ldots,n$,

\begin{center}
$\alpha \left(v_{j}v_{j+1}\right) =\left\{
\begin{tabular}{ll}
$2$, & if $j\equiv 0 \pmod{2}$, \\
$3$, & if $j\equiv 1 \pmod{2}$, \\
\end{tabular}%
\right.$
\end{center}
for $j=1,2,\ldots,n-1$, and $\alpha(v_{1}v_{n})=2$.

Case 3: $n\neq 3k$ ($k\in \mathbf{N}$) and $n$ is odd.

Define a total coloring $\alpha$ of the graph $C_{n}$ as follows:

\begin{center}
$\alpha \left(v_{i}\right) =\left\{
\begin{tabular}{ll}
$4$, & if $i\equiv 0 \pmod{2}$, $i\neq n-1$, \\
$1$, & if $i\equiv 1 \pmod{2}$, $i\neq n$, \\
$2$, & if $i=n-1$, \\
$3$, & if $i=n$, \\
\end{tabular}%
\right.$
\end{center}
for $i=1,2,\ldots,n$,

\begin{center}
$\alpha \left(v_{j}v_{j+1}\right) =\left\{
\begin{tabular}{ll}
$2$, & if $j\equiv 0 \pmod{2}$, $j\neq n-1$, \\
$3$, & if $j\equiv 1 \pmod{2}$,  \\
$4$, & if $j=n-1$, \\
\end{tabular}%
\right.$
\end{center}
for $j=1,2,\ldots,n-1$, and $\alpha(v_{1}v_{n})=2$.

It is easy to check that $\alpha$ is an interval total $3$-coloring
of the graph $C_{n}$, when $n=3k$, and an interval total
$4$-coloring of the graph $C_{n}$, when $n\neq 3k$. Hence, for any
$n\geq 3$, $C_{n}\in \mathfrak{T}$ and $w_{\tau}(C_{n})\leq 3$ if
$n=3k$ and $w_{\tau}(C_{n})\leq 4$ if $n\neq 3k$. On the other hand,
by Theorem \ref{mytheorem7}, and taking into account that
$w_{\tau}(C_{n})\geq \chi^{\prime\prime}(C_{n})$, we have
$w_{\tau}(C_{n})\geq 3$ if $n=3k$ and $w_{\tau}(C_{n})\geq 4$ if
$n\neq 3k$. Thus, (1) and (2) hold.

Let us prove (3).

Now we show that $W_{\tau}(C_{n})\geq n+2$ for any $n\geq 3$. For
that, we consider two cases.

Case 1: $n$ is even.

Define a total coloring $\beta$ of the graph $C_{n}$ as follows:

1. for $i=1,2,\ldots,\frac{n}{2}$, let
\begin{center}
$\beta(v_{i})=2i-1$, $\beta(v_{i}v_{i+1})=2i$
\end{center}

2. for $j=\frac{n}{2}+1,\ldots,n$, let
\begin{center}
$\beta(v_{j})=2(n-j)+4$,
\end{center}

3. for $k=\frac{n}{2}+1,\ldots,n-1$, let
\begin{center}
$\beta(v_{k}v_{k+1})=2(n-k)+3$,
\end{center}
and $\beta(v_{1}v_{n})=3$.

Case 2: $n$ is odd.

Define a total coloring $\beta$ of the graph $C_{n}$ as follows:

1. for $i=1,2,\ldots,\lceil\frac{n}{2}\rceil+1$, let
\begin{center}
$\beta(v_{i})=2i-1,$
\end{center}

2. for $j=\lceil\frac{n}{2}\rceil+2,\ldots,n$, let
\begin{center}
$\beta(v_{j})=2(n-j)+4$,
\end{center}

3. for $k=1,2,\ldots,\lceil\frac{n}{2}\rceil$, let
\begin{center}
$\beta(v_{k}v_{k+1})=2k$,
\end{center}

4. for $l=\lceil\frac{n}{2}\rceil+1,\ldots,n-1$, let
\begin{center}
$\beta(v_{l}v_{l+1})=2(n-l)+3$,
\end{center}
and $\beta(v_{1}v_{n})=3$.

It is not difficult to see that $\beta$ is an interval total
$(n+2)$-coloring of the graph $C_{n}$. Thus, $W_{\tau}(C_{n})\geq
n+2$ for any $n\geq 3$. On the other hand, using Corollary
\ref{mycorollary3}, and taking into account that
$diam(C_{n})=\lfloor\frac{n}{2}\rfloor$ and $\Delta(C_{n})=2$, it is
easy to show that $W_{\tau}(C_{n})\leq n+2$ for any $n\geq 3$.
$~\square$
\end{proof}

In \cite{b5} it was proved the following result.

\begin{theorem}
\label{mytheorem9} For the complete graph $K_{n}$,
\begin{center}
$\chi^{\prime\prime}(K_{n})=\left\{
\begin{tabular}{ll}
$n$, & if $n$ is odd, \\
$n+1$, & if $n$ is even. \\
\end{tabular}%
\right.$
\end{center}
\end{theorem}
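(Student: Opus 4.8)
The plan is to establish matching lower and upper bounds in each parity class. Since $\chi^{\prime\prime}(G)\geq \Delta(G)+1$ for every graph (a maximum-degree vertex together with its incident edges forms $\Delta(G)+1$ pairwise-conflicting elements) and $\Delta(K_{n})=n-1$, we already have $\chi^{\prime\prime}(K_{n})\geq n$. The whole content is therefore to produce an $n$-coloring when $n$ is odd, and to rule out $n$ colors while producing $n+1$ when $n$ is even.

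For odd $n$ I would give an explicit algebraic total coloring. Identifying $V(K_{n})$ with the residues $0,1,\ldots,n-1$ modulo $n$, color the edge joining $i$ and $j$ with color $i+j \pmod{n}$, and color vertex $i$ with color $2i \pmod{n}$. The routine verification is: two edges at a common vertex $i$ receive $i+j$ and $i+k$ for $j\neq k$, hence distinct colors; an edge $ij$ has color $i+j$, which differs from the endpoint colors $2i$ and $2j$ precisely because $i\neq j$; and two vertices $i,j$ receive $2i$ and $2j$, which differ because $2$ is invertible modulo the odd number $n$. Thus this is a proper total coloring using only colors from a set of size $n$, so $\chi^{\prime\prime}(K_{n})\leq n$, matching the lower bound and giving $\chi^{\prime\prime}(K_{n})=n$ for odd $n$.

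For even $n$ the crucial step is the improved lower bound $\chi^{\prime\prime}(K_{n})\geq n+1$, and this is where I expect the only real difficulty. Suppose a total coloring $\alpha$ used only $n$ colors. Each vertex together with its $n-1$ incident edges needs $n$ distinct colors, so every palette $S[v,\alpha]$ is the full set of $n$ colors; in particular the $n$ pairwise-adjacent vertices carry $n$ distinct vertex colors, so each color is the vertex color of exactly one vertex. Fix a color $c$ and let $v_{c}$ be the unique vertex with $\alpha(v_{c})=c$. For every other vertex $w$, the color $c$ lies in $S[w,\alpha]$ but is not $w$'s vertex color, so $c$ appears on an edge incident to $w$; and $c$ cannot appear on any edge incident to $v_{c}$ by the incidence constraint. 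Hence the edges colored $c$ form a matching saturating exactly $V(K_{n})\setminus\{v_{c}\}$, a set of odd cardinality $n-1$, which is impossible since a matching saturates an even number of vertices. This contradiction forces $\chi^{\prime\prime}(K_{n})\geq n+1$.

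The matching upper bound for even $n$ is then immediate from the odd case: $K_{n}$ is an induced subgraph of $K_{n+1}$, and the restriction of a proper total coloring of $K_{n+1}$ to this subgraph is again a proper total coloring. Since $n+1$ is odd, the odd case gives $\chi^{\prime\prime}(K_{n+1})=n+1$, whence $\chi^{\prime\prime}(K_{n})\leq n+1$. Combined with the lower bound from the previous paragraph, this yields $\chi^{\prime\prime}(K_{n})=n+1$ for even $n$, completing both cases.
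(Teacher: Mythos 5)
Your proof is correct and complete. Note, however, that the paper does not prove this statement at all: Theorem \ref{mytheorem9} is quoted as a known result of Behzad, Chartrand and Cooper (reference [5] in the bibliography), so there is no in-paper argument to compare against. What you have supplied is the standard self-contained proof: the trivial bound $\chi''(G)\geq\Delta(G)+1$; for odd $n$ the algebraic coloring $\alpha(ij)=i+j$, $\alpha(i)=2i$ over $\mathbf{Z}_n$, whose properness hinges exactly on $2$ being invertible modulo odd $n$; for even $n$ the parity obstruction (each color class of edges would have to be a perfect matching on the odd-sized set $V\setminus\{v_c\}$), and the upper bound by restricting the total coloring of $K_{n+1}$. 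All four steps check out; in particular the key observation in the even case --- that every palette $S[v,\alpha]$ must be the full color set and hence the vertex colors biject with the colors --- is argued correctly. This gives the paper a proof where it currently relies on a citation, which is a net gain in self-containedness, though for a survey-style use of the result the citation is of course also acceptable.
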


\begin{theorem}
\label{mytheorem10} For any $n\in \mathbf{N}$, we have
\begin{description}
\item[(1)] $K_{n}\in \mathfrak{T}$,

\item[(2)] $w_{\tau}(K_{n})=\left\{
\begin{tabular}{ll}
$n$, & if $n$ is odd, \\
$\frac{3}{2}n$, & if $n$ is even,\\
\end{tabular}%
\right.$

\item[(3)] $W_{\tau}(K_{n})=2n-1$.
\end{description}
\end{theorem}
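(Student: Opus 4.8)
The plan is to treat the three claims largely independently, reusing the auxiliary-graph machinery for the upper bounds. Since $K_n$ is connected, Corollary \ref{mycorollary1} gives $W_{\tau}(K_n)\le 2n-1$ immediately, so for (3) it remains only to produce an interval total $(2n-1)$-coloring; the same coloring will also settle (1). I would define $\alpha(v_i)=2i-1$ for each vertex and $\alpha(v_iv_j)=i+j-1$ for each edge. A direct check shows this is a proper total coloring with $S[v_i,\alpha]=[i,i+n-1]$, the color $2i-1$ playing the role of the vertex color inside this interval, and that the colors $1,\dots,2n-1$ all occur (the odd ones via vertices, the even ones via edges). Hence $W_{\tau}(K_n)\ge 2n-1$ and $K_n\in\mathfrak{T}$, proving (1) and (3).

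For the odd case of (2), the lower bound $w_{\tau}(K_n)\ge\chi''(K_n)=n$ is given by Theorem \ref{mytheorem9}. For the matching upper bound I would use the cyclic construction on $V(K_n)=\mathbb{Z}_n$: color the edge $\{i,j\}$ with $i+j \pmod n$ and the vertex $i$ with $2i \pmod n$, using colors $0,\dots,n-1$ and then shifting by one. Because $n$ is odd, $i\mapsto 2i$ is a bijection of $\mathbb{Z}_n$, so the vertex colors are pairwise distinct and, at each vertex, the unique edge color not occurring is precisely the vertex color; thus every $S[v,\alpha]$ equals $\{1,\dots,n\}$, an interval of length $n=d_{K_n}(v)+1$. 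This yields $w_{\tau}(K_n)\le n$, and equality follows.

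The even case of (2) splits into a lower and an upper bound. For the lower bound I would argue by parity. Fix an interval total $t$-coloring and, for each color $c$, let $d_c$ be the number of vertices $v$ with $c\in S[v,\alpha]$. Since the vertices colored $c$ form an independent set (at most one vertex in $K_n$) and the edges colored $c$ form a matching, $d_c\equiv\varepsilon_c\pmod 2$, where $\varepsilon_c=1$ if some vertex is colored $c$ and $\varepsilon_c=0$ otherwise. As all $n$ vertex colors are distinct, exactly $n$ colors have $d_c$ odd. By Proposition \ref{myproposition2} the smallest and largest colors are $1$ and $t$, so the largest left endpoint of an interval is $s=t-n+1$; for every $c\in[s,n]$ (nonempty when $t\le 2n-1$; otherwise $t\ge 2n>\frac{3}{2}n$ already) all $n$ intervals contain $c$, forcing $d_c=n$, which is even since $n$ is even. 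Therefore odd values of $d_c$ can occur only for $c\in[1,s-1]\cup[n+1,t]$, at most $2(t-n)$ colors, and $n\le 2(t-n)$ gives $t\ge\frac{3}{2}n$.

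It remains to build an interval total $\frac{3}{2}n$-coloring, which I expect to be the main obstacle. The target structure is dictated by the equality case above: the vertex colors must be exactly $[1,\frac{n}{2}]\cup[n+1,\frac{3}{2}n]$, while the colors in $[\frac{n}{2}+1,n]$ form a plateau appearing at every vertex. I would generalize the $n=4$ pattern with vertex colors $1,2,5,6$ and the symmetric edge matrix whose rows realize the intervals $[1,4],[2,5],[2,5],[3,6]$: split $V(K_n)$ into a low half and a high half, assign the two blocks of vertex colors, give the vertices a staircase of length-$n$ intervals with left endpoints $1,2,2,3,3,\dots,\frac{n}{2},\frac{n}{2}+1$, and then fill the edges by a symmetric proper coloring so that the edges incident to each vertex exhaust its interval minus the vertex color. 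Producing such a symmetric edge coloring in closed form and verifying properness is routine but notation-heavy; this is the step I would spend the most care on, since the interval constraints imposed by the two halves must be reconciled along the edges joining them.
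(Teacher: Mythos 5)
Your treatment of (1), (3), and the odd case of (2) is correct and complete. The $(2n-1)$-coloring $\alpha(v_i)=2i-1$, $\alpha(v_iv_j)=i+j-1$ is exactly the paper's construction (the paper gets the matching upper bound from Corollary \ref{mycorollary3} rather than Corollary \ref{mycorollary1}, but both give $2n-1$). For odd $n$ you give a self-contained cyclic $n$-coloring on $\mathbb{Z}_n$, whereas the paper simply observes that for a regular graph any $\chi''$-total coloring with $\chi''=\Delta+1$ colors is automatically interval; both are fine. Your lower bound $w_\tau(K_n)\ge\frac{3}{2}n$ for even $n$ is a clean counting repackaging of the paper's argument: the paper shows no vertex can receive a color in the common plateau $[t-n+1,n]$ because the edges of that color form a matching that would have to cover the odd number $n-1$ of remaining vertices, and then counts the $2(t-n)$ colors left for the $n$ distinct vertex colors; your parity invariant $d_c\equiv\varepsilon_c\pmod 2$ together with $d_c=n$ on the plateau encodes the same two facts, and you correctly dispose of the degenerate case $t\ge 2n$.

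The genuine gap is the upper bound $w_\tau(K_n)\le\frac{3}{2}n$ for even $n$: you describe the target structure (vertex colors $[1,\frac{n}{2}]\cup[n+1,\frac{3}{2}n]$, a staircase of intervals with left endpoints $1,2,2,3,3,\dots$) but never exhibit the edge coloring, deferring it as ``routine but notation-heavy.'' This is the technical heart of the theorem and cannot be waved through: one must produce a symmetric assignment $\beta(v_iv_j)$ that is simultaneously proper and realizes, at every vertex, exactly the prescribed interval minus the vertex color, and the two halves of the vertex set impose conflicting demands along the crossing edges. The paper resolves this with an explicit six-case formula (splitting on the parity of $i+j$ and on whether $i+j$ exceeds $n$, shifting by $\frac{n}{2}$ in half the cases) followed by a non-trivial verification that $S[v_i,\beta]$ is the interval $[\lceil\frac{i}{2}\rceil+1,\lceil\frac{i}{2}\rceil+n]$ (even $i$) or $[\frac{i+1}{2},\frac{i-1}{2}+n]$ (odd $i$). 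Your intended staircase matches what that formula produces, so the plan is sound, but as written the proof of (2) for even $n$ is only half done: you have the optimality argument but not the witness coloring.
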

\begin{proof} Let $V(K_{n})=\{v_{1},v_{2},\ldots,v_{n}\}$.

First we show that $K_{n}$ has an interval total $(2n-1)$-coloring
for any $n\in \mathbf{N}$. For that, we define a total coloring
$\alpha$ of the graph $K_{n}$ as follows:

1. for $i=1,2,\ldots,n$, let $\alpha(v_{i})=2i-1$;

2. for $i=1,2,\ldots,n$ and $j=1,2,\ldots,n$, where $i\neq j$, let
$\alpha(v_{i}v_{j})=i+j-1$.

It is easy to see that $\alpha$ is an interval total
$(2n-1)$-coloring of the graph $K_{n}$. This proves that $K_{n}\in
\mathfrak{T}$ and $W_{\tau}(K_{n})\geq 2n-1$ for any $n\in
\mathbf{N}$. On the other hand, using Corollary \ref{mycorollary3},
and taking into account that $diam(K_{n})=1$ and
$\Delta(K_{n})=n-1$, it is simple to show that $W_{\tau}(K_{n})\leq
2n-1$ for any $n\in \mathbf{N}$. Thus, (1) and (3) hold.

Let us prove (2). We  consider two cases.

Case 1: $n$ is odd.

Since $K_{n}$ is a regular graph, by Theorem \ref{mytheorem9}, we
have $w_{\tau}(K_{n})=\chi^{\prime\prime}(K_{n})=n$.

Case 2: $n$ is even.

Now we show that $w_{\tau}(K_{n})\leq \frac{3}{2}n$.

Define a total coloring $\beta$ of the graph $K_{n}$ as follows:

1. for $i=1,2,\ldots,\frac{n}{2}$, let
\begin{center}
$\beta(v_{i})=i$;
\end{center}

2. for $j=\frac{n}{2}+1,\ldots,n$, let
\begin{center}
$\beta(v_{j})=\frac{n}{2}+j$;
\end{center}

3. for $i=1,2,\ldots,n$, $j=1,2,\ldots,n$, $i<j$, $i+j$ is odd, and
$i+j-1\leq n$, let
\begin{center}
$\beta(v_{i}v_{j})=\frac{n}{2}+\frac{i+j-1}{2}$;
\end{center}

4. for $i=1,2,\ldots,n$, $j=1,2,\ldots,n$, $i<j$, $i+j$ is odd, and
$i+j-1> n$, let
\begin{center}
$\beta(v_{i}v_{j})=\frac{i+j-1}{2}$;
\end{center}

5. for $i=1,2,\ldots,n$, $j=1,2,\ldots,n$, $i<j$, $i+j$ is even, and
$i+j\leq n$, let
\begin{center}
$\beta(v_{i}v_{j})=\frac{i+j}{2}$;
\end{center}

6. for $i=1,2,\ldots,n$, $j=1,2,\ldots,n$, $i<j$, $i+j$ is even, and
$i+j> n$, let
\begin{center}
$\beta(v_{i}v_{j})=\frac{n}{2}+\frac{i+j}{2}$.
\end{center}

Let us show that $\beta$ is an interval total
$\frac{3}{2}n$-coloring of the graph $K_{n}$.

Let $v_{i}\in V(K_{n})$, where $1\leq i\leq n$.

If $i$ is even, by the definition of $\beta$, we have

\begin{eqnarray*}
S\left[v_{i},\beta \right] &=&\left({\bigcup\limits_{1\leq l\leq
\frac{n+2-i}{2}}\left\{\frac{n}{2}+\frac{i+(2l-1)-1}{2}\right\}}\right)\cup
\left({\bigcup\limits_{\frac{n+2-i}{2}<l\leq
\frac{n}{2}}\left\{\frac{i+(2l-1)-1}{2}\right\}}\right)\cup\\
&& \left({\bigcup\limits_{1\leq l\leq \frac{n-i}{2}, l\neq
\frac{i}{2}}\left\{\frac{i+2l}{2}\right\}}\right)\cup
\left({\bigcup\limits_{\frac{n-i}{2}<l\leq \frac{n}{2}, l\neq
\frac{i}{2}}\left\{\frac{n}{2}+\frac{i+2l}{2}\right\}}\right)\cup
\left\{i+\frac{n}{2}sg\left(i-\frac{n}{2}\right)\right\}\\
&=& \left[\frac{n+i}{2},n\right]\cup
\left[\frac{n}{2}+1,\frac{n+i}{2}-1\right]\cup
\left(\left[\frac{i}{2}+1,\frac{n}{2}\right]\setminus
\{i\}\right)\cup\\
&& \left(\left[n+1,\frac{i}{2}+n\right]\setminus
\left\{\frac{n}{2}+i\right\}\right)\cup
\left\{i+\frac{n}{2}sg\left(i-\frac{n}{2}\right)\right\}\\
&=& \left[\frac{i}{2}+1,\frac{i}{2}+n\right],
\end{eqnarray*}

and if $i$ is odd, by the definition of $\beta$, we have

\begin{eqnarray*}
S\left[v_{i},\beta \right] &=&\left({\bigcup\limits_{1\leq l\leq
\frac{n+1-i}{2}}\left\{\frac{n}{2}+\frac{i+2l-1}{2}\right\}}\right)\cup
\left({\bigcup\limits_{\frac{n+1-i}{2}<l\leq
\frac{n}{2}}\left\{\frac{i+2l-1}{2}\right\}}\right)\cup\\
&& \left({\bigcup\limits_{1\leq l\leq \frac{n+1-i}{2}, l\neq
\frac{i+1}{2}}\left\{\frac{i+2l-1}{2}\right\}}\right)\cup
\left({\bigcup\limits_{\frac{n+1-i}{2}<l\leq \frac{n}{2}, l\neq
\frac{i+1}{2}}\left\{\frac{n}{2}+\frac{i+2l-1}{2}\right\}}\right)\cup\\
&& \left\{i+\frac{n}{2}sg\left(i-\frac{n}{2}\right)\right\}\\
&=& \left[\frac{n+i+1}{2},n\right]\cup
\left[\frac{n}{2}+1,\frac{n+i-1}{2}\right]\cup
\left(\left[\frac{i+1}{2},\frac{n}{2}\right]\setminus
\{i\}\right)\cup\\
&& \left(\left[n+1,\frac{i-1}{2}+n\right]\setminus
\left\{\frac{n}{2}+i\right\}\right)\cup
\left\{i+\frac{n}{2}sg\left(i-\frac{n}{2}\right)\right\}\\
&=& \left[\frac{i+1}{2},\frac{i-1}{2}+n\right].
\end{eqnarray*}

This shows that $\beta$ is an interval total $\frac{3}{2}n$-coloring
of the graph $K_{n}$.

Next we prove that $w_{\tau}(K_{n})\geq \frac{3}{2}n$.

Suppose, to the contrary, that $\gamma$ is an interval total
$w_{\tau}(K_{n})$-coloring of the graph $K_{n}$, where $n\leq
w_{\tau}(K_{n})\leq \frac{3}{2}n-1$.

Since $w_{\tau}(K_{n})\geq \chi^{\prime\prime}(K_{n})$, by Theorem
\ref{mytheorem9}, we have $n+1\leq w_{\tau}(K_{n})\leq
\frac{3}{2}n-1$.

Consider the vertices $v_{1},v_{2},\ldots,v_{n}$. It is clear that
\begin{center}
$1\leq \min S[v_{i},\gamma]\leq w_{\tau}(K_{n})-n+1$ for
$i=1,2,\ldots,n$.
\end{center}

Hence, $\{w_{\tau}(K_{n})-n+1,\ldots,n\}\subseteq S[v_{i},\gamma]$
for $i=1,2,\ldots,n$. Let us show that none of the vertices
$v_{1},v_{2},\ldots,v_{n}$ is colored by $j$,
$j=w_{\tau}(K_{n})-n+1,\ldots,n$. Suppose that
$\gamma(v_{i_{0}})=j_{0}$, $j_{0}\in
\{w_{\tau}(K_{n})-n+1,\ldots,n\}$. Clearly, $\gamma(v_{i})\neq
j_{0}$ for $i=1,2,\ldots,n$ and $i\neq i_{0}$. This implies that any
vertex $v_{i}$, except $v_{i_{0}}$, is incident to an edge with
color $j_{0}$, which is a contradiction. The contradiction shows
that $\gamma (v_{i})\notin \{w_{\tau}(K_{n})-n+1,\ldots,n\}$ for
$i=1,2,\ldots,n$. Hence,
\begin{center}
$\gamma (v_{i})\in \{1,2,\ldots,w_{\tau}(K_{n})-n\}\cup
\{n+1,\ldots,w_{\tau}(K_{n})\}$ for $i=1,2,\ldots,n$.
\end{center}

On the other hand, since $\chi(K_{n})=n$, we have
\begin{center}
$\vert\{1,2,\ldots,w_{\tau}(K_{n})-n\}\vert +
\vert\{n+1,\ldots,w_{\tau}(K_{n})\}\vert\geq n$,
\end{center}
thus $w_{\tau}(K_{n})\geq \frac{3}{2}n$, which is a contradiction.
$~\square$
\end{proof}

\begin{theorem}
\label{mytheorem11} For any $n\in \mathbf{N}$,
\begin{description}
\item[(1)] if $2n-1\leq t\leq 4n-3$, then $K_{2n-1}\in \mathfrak{T}_{t}$,

\item[(2)] if $3n\leq t\leq 4n-1$, then $K_{2n}\in \mathfrak{T}_{t}$.
\end{description}
\end{theorem}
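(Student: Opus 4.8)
The plan is to show that for complete graphs the set of feasible parameters is exactly the full interval $[w_\tau,W_\tau]$. By Theorem \ref{mytheorem10} we already know $w_\tau(K_{2n-1})=2n-1$, $W_\tau(K_{2n-1})=4n-3$, $w_\tau(K_{2n})=3n$ and $W_\tau(K_{2n})=4n-1$, so the two ranges in the statement are precisely $[\,w_\tau(K_{2n-1}),W_\tau(K_{2n-1})\,]$ and $[\,w_\tau(K_{2n}),W_\tau(K_{2n})\,]$. The endpoints $t=w_\tau$ and $t=W_\tau$ are therefore realized by the colorings already built in Theorem \ref{mytheorem10}, and by the very definitions of $w_\tau$ and $W_\tau$ no $t$ outside these bands is feasible. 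Hence the whole content of the theorem is to exhibit, for every $t$ strictly between the two extremes, an interval total $t$-coloring of $K_{2n-1}$ (resp. $K_{2n}$).

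For each such $t$ I would produce an explicit total coloring on $V(K_m)=\{v_1,\dots,v_m\}$ by a piecewise formula that interpolates between the two extremal constructions of Theorem \ref{mytheorem10}: the \emph{spread} coloring $\alpha(v_i)=2i-1$, $\alpha(v_iv_j)=i+j-1$ that attains $W_\tau=2m-1$, and the \emph{folded} coloring (the $\beta$ of Theorem \ref{mytheorem10}, obtained by folding the colors of the edges $v_iv_j$ at the threshold $i+j\approx m$ and shifting the upper half by $\tfrac{m}{2}$) that attains $w_\tau$. Concretely, I would introduce a threshold parameter $\theta=\theta(t)$ and color an edge $v_iv_j$ according to whether $i+j$ lies below or above $\theta$, placing the ``low'' edges near $i+j$ and the ``high'' edges in a block shifted down so that exactly $t$ colors are used, choosing each vertex color to plug the single gap its incident edge-colors would otherwise leave. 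As $\theta$ runs over its admissible values the number of colors used should run through every integer in the band, and separate parity cases on $i$, on $j$, and on $i+j$ will be needed exactly as in the $\tfrac{3}{2}n$-construction of Theorem \ref{mytheorem10}.

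To certify that each candidate is genuinely an interval total $t$-coloring I would check three things. First, properness: the vertex colors must be pairwise distinct, incident edges must receive different colors, and each edge must differ from its two endpoints. Second, and most importantly, for every $v_i$ the set $S[v_i,\alpha]$ must be a block of $d_{K_m}(v_i)+1=m$ consecutive integers; this amounts to showing that the $m-1$ edge-colors at $v_i$ together with $\alpha(v_i)$ cover an interval without repetition, i.e.\ that the chosen vertex color exactly fills the unique hole (or extends the block at one end) left by the folded edge-colors. Third, that the colors $1$ and $t$ both occur; since $K_m$ is connected, Proposition \ref{myproposition2} then upgrades the coloring to a bona fide interval total $t$-coloring, so I never have to verify that every intermediate color appears.

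The main obstacle is the second verification, carried out uniformly in $t$. Folding is precisely what creates the difficulty: after the edge-colors at $v_i$ are folded, their union is an interval with one hole, but the position of that hole (and whether it is interior or at an endpoint) depends on $i$, on the parities involved, and on the threshold $\theta(t)$, so one must confirm that a single admissible vertex color always repairs it for \emph{every} $t$ in the band. I expect this to split into the same family of parity/threshold cases seen in Theorem \ref{mytheorem10}, with a short telescoping computation of $S[v_i,\alpha]$ in each case; once the palette book-keeping is in place, properness and the presence of the colors $1$ and $t$ are comparatively routine. An alternative, and possibly shorter, route would be to establish a one-step ``expansion'' lemma --- from an interval total $t$-coloring of $K_m$ with $t<2m-1$ build one with $t+1$ colors --- and then induct upward starting from the $w_\tau$-coloring of Theorem \ref{mytheorem10}; there the burden shifts to proving that such an expansion is always available for these particular graphs.
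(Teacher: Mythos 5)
Your strategy points in the same direction as the paper's: for part (1) the paper takes the spread coloring $\alpha(v_i)=2i-1$, $\alpha(v_iv_j)=i+j-1$ of $K_{2n-1}$ and simply subtracts $2n-1$ from every color exceeding $t$, and for part (2) it partially ``unfolds'' the $3n$-coloring of $K_{2n}$ from Theorem \ref{mytheorem10} using thresholds on $i+j-1$ that depend on $t-3n$, with parity splits. But as written your proposal has a genuine gap: the theorem is a pure existence statement whose entire content is the explicit family of colorings, and you never actually write one down. Every construction is left in the conditional (``I would produce\ldots'', ``I expect this to split\ldots''), and you yourself flag the central verification --- that $S[v_i,\cdot]$ is a block of $m$ consecutive colors for every vertex and every $t$ in the band --- as an unresolved obstacle. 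A plan that defers exactly the step carrying all the mathematical weight is not yet a proof, and the alternative ``expansion lemma'' route is likewise only named, not established.

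Two concrete points you would have to confront when filling this in. For $K_{2n-1}$ the naive fold works cleanly: each palette $S[v,\alpha]$ is an interval of exactly $2n-1$ colors, so shifting the part above $t$ down by $2n-1$ maps it bijectively onto the slot just below the bottom of that palette, and since the shift is odd the (all-odd) vertex colors stay pairwise distinct after wrapping. For $K_{2n}$ the same trick fails: wrapping by $2n$ sends $\alpha(v_{i+n})=2(i+n)-1$ to $2i-1=\alpha(v_i)$, a collision between two adjacent vertices --- this is precisely why the lower end of the band is $3n$ rather than $2n$, and why the even case requires the more delicate five-case construction starting from the folded $3n$-coloring rather than from the spread one. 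Your sketch lists properness of the vertex coloring as a routine check, but it is exactly the non-routine point here; until the formulas are written and this collision issue is handled, the argument is incomplete.
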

\begin{proof}
First we prove (1). For that, we transform the interval total
$(4n-3)$-coloring $\alpha$ of the graph $K_{2n-1}$ constructed in
 the proof of Theorem \ref{mytheorem10}, into an interval total
$t$-coloring $\beta$ of the same graph.

For every $v\in V(K_{2n-1})$, we set:

\begin{center}
$\beta (v) =\left\{
\begin{tabular}{ll}
$\alpha(v)$, & if $1\leq \alpha(v)\leq t$, \\
$\alpha(v)-2n+1$, & if $t+1\leq \alpha(v)\leq 4n-3$. \\
\end{tabular}%
\right.$
\end{center}

For every $e\in E(K_{2n-1})$, we set:

\begin{center}
$\beta (e) =\left\{
\begin{tabular}{ll}
$\alpha(e)$, & if $1\leq \alpha(e)\leq t$, \\
$\alpha(e)-2n+1$, & if $t+1\leq \alpha(e)\leq 4n-3$. \\
\end{tabular}%
\right.$
\end{center}

It is easy to see that $\beta$ is an interval total $t$-coloring of
the graph $K_{2n-1}$.

Let us prove (2).

For that, we transform the interval total $3n$-coloring $\beta$ of
the graph $K_{2n}$ constructed in the proof of Theorem
\ref{mytheorem10}, into an interval total $t$-coloring $\gamma$ of
the same graph.

Define a total coloring $\gamma$ of the graph $K_{2n}$ as follows:

1. for $i=1,2,\ldots,2n$, let
\begin{center}
$\gamma (v_{i}) =\left\{
\begin{tabular}{ll}
$\beta(v_{i})+t-3n$, & if $\beta(v_{i})+t-3n\leq 2i-1$, \\
$2i-1$, & if $\beta(v_{i})+t-3n>2i-1$; \\
\end{tabular}%
\right.$
\end{center}

2. for $i=1,2,\ldots,2n-1$, $j=1,2,\ldots,2n-1$, $i\neq j$, and
$i+j-1\leq 2(t-3n)+1$, let
\begin{center}
$\gamma (v_{i}v_{j}) =i+j-1$;
\end{center}

3. for $i=1,2,\ldots,2n$, $j=1,2,\ldots,2n$, $i\neq j$, and
$2(t-3n)+1<i+j-1<2n$, let
\begin{center}
$\gamma (v_{i}v_{j}) =\left\{
\begin{tabular}{ll}
$\beta(v_{i}v_{j})+t-3n$, & if $i+j$ is even, \\
$\beta(v_{i}v_{j})$, & if $i+j$ is odd; \\
\end{tabular}%
\right.$
\end{center}

4. for $i=1,2,\ldots,2n$, $j=1,2,\ldots,2n$, $i\neq j$, and $2n\leq
i+j-1\leq 2n+2(t-3n)+1$, let
\begin{center}
$\gamma (v_{i}v_{j}) =i+j-1$;
\end{center}

5. for $i=3,4,\ldots,2n$, $j=3,4,\ldots,2n$, $i\neq j$, and
$i+j-1>2n+2(t-3n)+1$, let
\begin{center}
$\gamma (v_{i}v_{j}) =\left\{
\begin{tabular}{ll}
$\beta(v_{i}v_{j})+t-3n$, & if $i+j$ is even, \\
$\beta(v_{i}v_{j})$, & if $i+j$ is odd; \\
\end{tabular}%
\right.$
\end{center}

It can be easily verified that $\gamma$ is an interval total
$t$-coloring of the graph $K_{2n}$. $~\square$
\end{proof}

Finally, we obtain the exact values of $w_{\tau}(G)$ and
$W_{\tau}(G)$ for wheels. Recall that a wheel $W_{n}$ $(n\geq 4)$ is
defined as follows:
\begin{center}
$V(W_{n})=\left\{ u,v_{1},v_{2},\ldots,v_{n-1}\right\}$ and
$E(W_{n})= \left\{uv_{i}|~1\leq i\leq n-1\right\}\cup\left\{
v_{i}v_{i+1}|~1\leq i\leq n-2\right\}\cup \{v_{1}v_{n-1}\}$.
\end{center}

\begin{lemma}\label{mylemma1}
For any $n\geq 4$, we have $W_{n}\in \mathfrak{T}$ and
\begin{center}
$w_{\tau}(W_{n})=\left\{
\begin{tabular}{ll}
$n+2$, & if $n=4$, \\
$n$, & if $n\geq 5$. \\
\end{tabular}%
\right.$
\end{center}
\end{lemma}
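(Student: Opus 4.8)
The plan is to establish the lemma in two directions: first construct explicit interval total colorings realizing the claimed values of $w_{\tau}(W_n)$ (giving upper bounds), and then prove matching lower bounds via counting/structural arguments. Since $W_n$ contains a universal vertex $u$ of degree $n-1$ and a rim cycle $C_{n-1}$ whose vertices have degree $3$, the structure is tightly constrained, and I would exploit this directly.

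For the \textbf{construction (upper bound)}, I would treat $n=4$ separately, since $W_4=K_4$ and part (3) of Theorem \ref{mytheorem10} already gives the exact value there; indeed $W_4=K_4$ is the even complete graph $K_4$, so $w_{\tau}(K_4)=\tfrac{3}{2}\cdot 4=6=n+2$, matching the claim. For $n\geq 5$ I would build a total coloring using about $n$ colors. The natural idea is to place the universal vertex $u$ roughly in the middle of the color range, color the $n-1$ spokes $uv_i$ so that together with $\alpha(u)$ they form the required interval $[\,a,a+n-1\,]$ of length $d_{W_n}(u)+1=n$ around $u$, and then color the rim edges and rim vertices so that each rim vertex $v_i$ (of degree $3$) sees an interval of $4$ consecutive colors. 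I expect to set $\alpha(u)$ to a fixed value (say near $\lceil n/2\rceil$), assign the spokes $\alpha(uv_i)$ as a permutation of the remaining colors in the spoke-interval, and then choose the two rim edges and the vertex color at each $v_i$ to close up each rim interval; a cyclic pattern along the rim, possibly with a case split on the parity or the residue of $n$, should make this work with exactly $n$ colors.

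For the \textbf{lower bound} $w_{\tau}(W_n)\geq n$ (for $n\geq 5$), I would argue that fewer than $n$ colors cannot suffice. The vertex $u$ together with its $n-1$ incident spokes must receive $n$ distinct consecutive colors (the color of $u$ plus $n-1$ edge colors, all distinct since they are incident to $u$), which already forces at least $n$ colors; hence $w_{\tau}(W_n)\geq n$ trivially from the interval condition at $u$, giving the clean bound. For $n=4$ the sharper value $n+2=6$ comes directly from the $K_4$ computation in Theorem \ref{mytheorem10}(2), so no separate lower-bound argument is needed there.

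The \textbf{main obstacle} will be the explicit rim coloring in the construction for $n\geq 5$: once $\alpha(u)$ and the spoke colors are fixed, each rim vertex $v_i$ has its spoke color determined, and I must still choose its vertex color and its two rim-edge colors so that $S[v_i,\alpha]$ is an interval of length $4$ while keeping all colors within $[1,n]$ and maintaining properness along the rim cycle. Making these local four-element intervals mesh consistently around a cycle of length $n-1$ is the delicate point, and I anticipate needing a careful cyclic assignment (and very likely a parity case analysis on $n$) to verify it; the remaining verifications that the coloring is proper and that every color in $[1,n]$ is used are routine. Throughout I would invoke Proposition \ref{myproposition2} so that I only need to check the interval condition at each vertex together with $\min$ and $\max$ colors being $1$ and $n$, rather than verifying every intermediate color appears.
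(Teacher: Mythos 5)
Your plan matches the paper's proof essentially step for step: the $n=4$ case is dispatched via $W_4=K_4$ and Theorem \ref{mytheorem10}, the lower bound $w_{\tau}(W_n)\geq n$ comes from the $n$ distinct colors forced at the universal vertex (the paper phrases this as $w_{\tau}(W_n)\geq\chi''(W_n)=\Delta(W_n)+1=n$, which is the same observation), and the upper bound is an explicit interval total $n$-coloring built around the hub and spokes with a parity case split on $n$ — exactly the structure of the paper's construction, except that the paper places $\alpha(u)=n$ at the top of the range rather than in the middle. The only thing you leave undone is the explicit rim assignment, which you correctly identify as the delicate point and which the paper carries out with concrete formulas; your plan is sound and is confirmed by the paper's construction.
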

\begin{proof}
Clearly, $W_{4}=K_{4}$, hence, by Theorem \ref{mytheorem10}, we have
$W_{4}\in \mathfrak{T}$ and $w_{\tau}(W_{4})=w_{\tau}(K_{4})=6$.

Assume that $n\geq 5$.

For the proof of the lemma we construct an interval total
$n$-coloring of the graph $W_{n}$. We consider two cases.

Case 1: $n$ is even.

Define a total coloring $\alpha $ of the graph $W_{n}$ as follows:

1) $\alpha(u)=n$, $\alpha(v_{1})=2$ and for
$i=2,\ldots,\frac{n}{2}-1$, let $\alpha \left( v_{i}\right) =2i+1$;

2) $\alpha(v_{\frac{n}{2}})=n-2$, $\alpha(v_{\frac{n}{2}+1})=n-4$,
and for $j=\frac{n}{2}+2,\ldots,n-1$, let $\alpha (v_{j})
=2(n-j+1);$

3) for $k=1,2,\ldots,\frac{n}{2}$, let
\begin{center}
$\alpha\left(uv_{k}\right) =2k-1$;
\end{center}

4) for $l=\frac{n}{2}+1,\ldots,n-1$, let
\begin{center}
$\alpha\left(uv_{l}\right) =2(n-l)$;
\end{center}

5) for $p=1,\ldots,\frac{n}{2}-1$, let

\begin{center}
$\alpha\left(v_{p}v_{p+1}\right) =2(p+1)$ and $\alpha
\left(v_{\frac{n}{2}}v_{\frac{n}{2}+1}\right) =n-3$;
\end{center}

6) for $q=\frac{n}{2}+1,\ldots,n-2$, let

\begin{center}
$\alpha\left(v_{q}v_{q+1}\right) =2(n-q)+1$ and
$\alpha\left(v_{1}v_{n-1}\right) =3$.
\end{center}

Case 2: $n$ is odd.

Define a total coloring $\beta $ of the graph $W_{n}$ as follows:

1) $\beta(u)=n$, $\beta(v_{1})=2$ and for $i=2,\ldots,\lfloor
\frac{n}{2}\rfloor-1$, let $\beta \left(v_{i}\right) =2i+1$;

2)$\beta(v_{\lfloor\frac{n}{2}\rfloor})=n-4$,
$\beta(v_{\lceil\frac{n}{2}\rceil})=n-2$ and for
$j=\lceil\frac{n}{2}\rceil+1,\ldots,n-1$, let $\beta(v_{j})
=2(n-j+1)$;

3) for $k=1,2,\ldots,\lfloor\frac{n}{2}\rfloor$, let
\begin{center}
$\beta\left(uv_{k}\right) =2k-1;$
\end{center}

4) for $l=\lceil\frac{n}{2}\rceil,\ldots,n-1$, let
\begin{center}
$\beta\left(uv_{l}\right) =2(n-l)$;
\end{center}

5) for $p=1,\ldots,\lfloor\frac{n}{2}\rfloor-1$, let
\begin{center}
$\beta\left(v_{p}v_{p+1}\right) =2(p+1)$ and $\beta
\left(v_{\lfloor\frac{n}{2}\rfloor}v_{\lceil\frac{n}{2}\rceil}\right)
=n-3$;
\end{center}

6) for $q=\lceil\frac{n}{2}\rceil,\ldots,n-2$, let
\begin{center}
$\beta\left(v_{q}v_{q+1}\right) =2(n-q)+1$ and $\beta \left(
v_{1}v_{n-1}\right) =3$.
\end{center}

It is not difficult to check that $\alpha$ is an interval total
$n$-coloring of the graph $W_{n}$, when $n$ is even, and $\beta$ is
an interval total $n$-coloring of the graph $W_{n}$, when $n$ is
odd. Hence, $W_{n}\in \mathfrak{T}$. On the other hand, clearly,
$w_{\tau}(W_{n})\geq \chi^{\prime\prime}(W_{n})=\Delta(W_{n})+1=n$,
thus $w_{\tau}(W_{n})=n$.
 $~\square$
\end{proof}

\begin{lemma}\label{mylemma2}
For any $n\geq 5$, we have $W_{n}\in
\mathfrak{T}_{n+1}\cap\mathfrak{T}_{n+2}$.
\end{lemma}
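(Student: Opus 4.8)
The plan is to exhibit, for each target number of colors $t\in\{n+1,n+2\}$, an explicit interval total $t$-coloring of $W_{n}$, following the template of the interval total $n$-coloring constructed in Lemma~\ref{mylemma1}. Recall that in $W_{n}$ the hub $u$ has degree $n-1$ while each rim vertex $v_{i}$ has degree $3$, the rim being the cycle $C_{n-1}$ (genuine since $n-1\geq 4$). Hence any total coloring we build must make $S[u,\cdot]$ a block of $n$ consecutive colors and each $S[v_{i},\cdot]$ a block of $4$ consecutive colors; in particular the $n-1$ spoke colors together with the hub color must exactly fill the length-$n$ hub block, so with $t=n+1$ that block is $[1,n]$ or $[2,n+1]$, and with $t=n+2$ it is one of $[1,n],[2,n+1],[3,n+2]$. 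Since $W_{n}$ is connected, by Proposition~\ref{myproposition2} it suffices to guarantee these consecutiveness conditions together with $\min=1$ and $\max=t$; I need not check separately that every intermediate color occurs.

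The most transparent route is a fresh construction, written separately for even and odd $n$ exactly as the colorings $\alpha$ and $\beta$ of Lemma~\ref{mylemma1}, with the hub block deliberately positioned to leave room at one or both ends for the extra color(s). For the $(n+1)$-coloring I would keep a spoke/hub pattern realizing a hub block of length $n$ but shift it so that the overall range becomes $[1,n+1]$: the rim colors ramp up along $v_{1},v_{2},\dots$ and back down along $\dots,v_{n-2},v_{n-1}$, and I would extend this ramp by one unit so that some rim vertex attains the new extreme color while the spoke colors still exactly fill the hub block. For the $(n+2)$-coloring I would extend the ramp by two units, forcing color $1$ at a rim vertex near the bottom of the ramp and color $n+2$ at a rim vertex near the top. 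An equivalent option is to start directly from the coloring of Lemma~\ref{mylemma1} (hub block $[1,n]$, $\alpha(u)=n$) and redistribute colors along the rim to widen the range, but I expect the self-contained parity-split construction to be cleaner to verify.

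The hard part will be the bookkeeping at the two seams of the rim. The spoke colors turn around at $v_{\lfloor n/2\rfloor}$ and $v_{\lceil n/2\rceil}$ — the vertices that already require the ad hoc values $n-2,n-4$ and the special rim edge colored $n-3$ in Lemma~\ref{mylemma1} — so their $4$-blocks must be re-verified after widening, and the wrap-around edge $v_{1}v_{n-1}$ must receive a color compatible with both of its endpoints. Because each rim edge is shared by two consecutive rim vertices, a change made to realize the new extreme color at one vertex propagates along the cycle, and the genuine obstacle is closing the cycle consistently at $v_{1}v_{n-1}$ while keeping every rim $4$-block a set of consecutive integers and the hub block a set of $n$ consecutive integers. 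Once these finitely many local checks are carried out for even and odd $n$, and one notes that colors $1$ and $t$ both occur, Proposition~\ref{myproposition2} certifies that the colorings are interval total $t$-colorings, giving $W_{n}\in\mathfrak{T}_{n+1}\cap\mathfrak{T}_{n+2}$.
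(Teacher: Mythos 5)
There is a genuine gap: you have written a plan for a construction, not the construction itself. The content of this lemma \emph{is} the explicit pair of colorings, and everything you defer --- the exact colors of the spokes, of the rim vertices and rim edges near the two ``turnaround'' vertices $v_{\lfloor n/2\rfloor}$, $v_{\lceil n/2\rceil}$, and of the wrap-around edge $v_{1}v_{n-1}$ --- is exactly the part you yourself flag as ``the genuine obstacle.'' Phrases such as ``I would extend this ramp by one unit so that some rim vertex attains the new extreme color'' assert that the seams can be closed without exhibiting how; since the difficulty of the lemma lives entirely in those seams, nothing has actually been proved. The checks are also not ``finitely many'' in the sense you suggest: they are families of conditions parametrized by $n$, and verifying them requires the closed-form color assignments you never write down. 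Your surrounding observations are sound --- the hub block must be one of $[1,n]$, $[2,n+1]$ (resp.\ also $[3,n+2]$), and Proposition~\ref{myproposition2} does let you skip the surjectivity check --- but these only constrain the answer; they do not produce it.

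Your overall strategy (explicit construction, split by parity, modeled on Lemma~\ref{mylemma1}) is the same as the paper's for the $(n+2)$-coloring, so if you carry out the bookkeeping it should succeed. One concrete suggestion that would shorten your work considerably: the paper obtains the $(n+1)$-coloring not by a second fresh ramp but by taking its $(n+2)$-coloring $\alpha$ (which has hub block $[1,n]$ with $\alpha(u)=1$, so color $n+2$ appears only on the rim, on edges whose $4$-blocks are $[n-1,n+2]$) and recoloring every edge of color $n+2$ with color $n-2$; each affected rim vertex's palette slides from $[n-1,n+2]$ to $[n-2,n+1]$ and nothing else changes. Adopting that one-line perturbation would reduce your task to verifying a single explicit $(n+2)$-coloring.
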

\begin{proof}
First we show that $W_{n}\in \mathfrak{T}_{n+2}$ for any $n\geq 5$.

Define a total coloring $\alpha $ of the graph $W_{n}$ as follows:

1) $\alpha(u)=1$,
 $\alpha(v_{1})=3$, $\alpha(v_{\lceil\frac{n}{2}\rceil}) =n-1$ and for
$i=2,\ldots,\lceil\frac{n}{2}\rceil-1$, let $\alpha\left(
v_{i}\right) =2(i+1)$;

2) for $j=\lceil\frac{n}{2}\rceil+1,\ldots,n-1$, let $\alpha(v_{j})
=2(n-j)+3$;

3) for $k=1,2,\ldots,\lfloor\frac{n}{2}\rfloor$, let
\begin{center}
$\alpha\left(uv_{k}\right) =2k$;
\end{center}
4) for $l=\lfloor\frac{n}{2}\rfloor+1,\ldots,n-1$, let
\begin{center}
$\alpha\left(uv_{l}\right) =2(n-l)+1$;
\end{center}
5) for $p=1,\ldots,\lfloor\frac{n-1}{2}\rfloor$, let
\begin{center}
$\alpha\left(v_{p}v_{p+1}\right) =2p+3$;
\end{center}
6) for $q=\lfloor\frac{n-1}{2}\rfloor+1,\ldots,n-2$
\begin{center}
$\alpha\left(v_{q}v_{q+1}\right)=2(n-q+1)$ and $\alpha\left(
v_{1}v_{n-1}\right) =4$.
\end{center}

It is easily seen that $\alpha$ is an interval total
$(n+2)$-coloring of the graph $W_{n}$.

Now we show that $W_{n}\in \mathfrak{T}_{n+1}$ for any $n\geq 5$.

Define a total coloring $\beta $ of the graph $W_{n}$ as follows:

1) for $\forall v\in V(W_{n})$, let $\beta(v)=\alpha(v)$;

2) for $\forall e\in E(W_{n})$, let

\begin{center}
$\beta (e)=\left\{
\begin{tabular}{ll}
$\alpha(e)$, & if $\alpha(e)\neq n+2$, \\
$n-2$, & otherwise. \\
\end{tabular}%
\right.$
\end{center}

It is easily seen that $\beta$ is an interval total $(n+1)$-coloring
of the graph $W_{n}$.
 $~\square$
\end{proof}

\begin{lemma}
\label{mylemma3} For any $n\geq 4$, we have $W_{\tau}(W_{n})\geq
n+3$.
\end{lemma}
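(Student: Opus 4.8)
The statement is a lower bound, so the plan is to \emph{exhibit}, for each $n\ge 4$, an interval total $(n+3)$-coloring of $W_n$; a single such coloring already forces $W_\tau(W_n)\ge n+3$, since $W_\tau$ is by definition the greatest admissible number of colors. The case $n=4$ costs nothing: $W_4=K_4$, so by Theorem~\ref{mytheorem10}(3) we get $W_\tau(W_4)=W_\tau(K_4)=2\cdot 4-1=7=n+3$. Hence the genuine work is to build an explicit $(n+3)$-coloring for $n\ge 5$, which I would do by an explicit formula split according to the parity of $n$, in the style of Lemmas~\ref{mylemma2} and the preceding constructions.

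For the construction I would first fix the hub. Since $d_{W_n}(u)=n-1$, the set $S[u,\alpha]$ is a block of $n$ consecutive colors inside $\{1,\dots,n+3\}$; I would take $\alpha(u)=1$ and color the $n-1$ spokes $uv_k$ by the same \emph{mountain} pattern as in Lemma~\ref{mylemma2} (colors $2,4,6,\dots$ climbing to the peak value $n$ at $v_{\lfloor n/2\rfloor}$ or $v_{\lceil n/2\rceil}$ and then $\dots,5,3$ descending), so that $S[u,\alpha]=[1,n]$ and the spokes realize exactly $\{2,\dots,n\}$. This leaves precisely the three colors $n+1,n+2,n+3$ still to be placed, and they can only go on the rim. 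Here is the point that makes the construction essentially forced: a length-$4$ interval containing $n+3$ must equal $[n,n+3]$, and an \emph{edge} colored $n+3$ would require both of its rim endpoints to have spoke color $n$, which is impossible; hence $n+3$ must color the unique rim vertex whose spoke is $n$ (the peak), with $S[v_{\mathrm{peak}},\alpha]=[n,n+3]$. I would therefore color that vertex $n+3$ and its two rim edges $n+1$ and $n+2$, letting the flanking vertex with spoke $n-1$ absorb the color $n+2$ into its interval $[n-1,n+2]$ and the flanking vertex with spoke $n-2$ absorb $n+1$ into $[n-2,n+1]$; the remaining rim vertices and rim edges keep the colors they had in the $(n+2)$-coloring of Lemma~\ref{mylemma2}, away from the peak.

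To finish I would invoke Proposition~\ref{myproposition2}: for a connected graph it suffices to check that the coloring is a proper total coloring in which each $S[v,\alpha]$ is consecutive of length $d_{W_n}(v)+1$, and that the least color used is $1$ and the greatest is $n+3$; the requirement that every intermediate color occur is then automatic. Color $1$ appears at the hub and color $n+3$ at the peak, so only properness and the local interval conditions remain.

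The main obstacle is exactly this local verification in a bounded neighborhood of the peak, where the three extra colors are injected. One must confirm that the two rim edges colored $n+1,n+2$ and the vertex color $n+3$ fit consecutively against the spoke colors $n,n-1,n-2$ of the three vertices involved, that no properness clash is created where these blocks meet the descending side of the mountain, and that the cyclic wrap-around edge $v_1v_{n-1}$ is handled. Away from the peak the rim intervals coincide with those already checked in Lemma~\ref{mylemma2}, so the genuinely new bookkeeping is confined to a constant-size patch around $v_{\lfloor n/2\rfloor}$, and it must be recorded separately for $n$ even and $n$ odd, since the vertex carrying spoke color $n-1$ lies on opposite sides of the peak in the two parities.
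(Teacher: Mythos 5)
Your proposal is correct and follows essentially the same route as the paper: an explicit interval total $(n+3)$-coloring of $W_n$, written out by formulas split according to the parity of $n$, which immediately gives the lower bound. The paper's coloring is just the mirror image of yours (it puts the hub at the top of the range, $\alpha(u)=n+3$ with $S[u,\alpha]=[4,n+3]$, and the colors $1,2,3$ on the rim), equivalent to your version via Proposition~\ref{myproposition1}, and the local verification around the peak that you defer does go through.
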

\begin{proof}
Clearly, for the proof of the lemma it suffices to construct
an interval total $(n+3)$-coloring of the graph $W_{n}$ for $n\geq
4$. We consider two cases.

Case 1: $n$ is even.

Define a total coloring $\alpha$ of the graph $W_{n}$ as follows:

1) for $i=1,2,\ldots,\frac{n}{2}+1$, let $\alpha\left(v_{i}\right)
=2i-1$;

2) for $j=\frac{n}{2}+2,\ldots,n-1$, let $\alpha(v_{j}) =2(n-j+1)$;

3) for $k=1,2,\ldots,\frac{n}{2}$, let
\begin{center}
$\alpha\left(v_{k}v_{k+1}\right) =2k$;
\end{center}

4) for $l=\frac{n}{2}+1,\ldots,n-2$, let
\begin{center}
$\alpha\left(v_{l}v_{l+1}\right)=2(n-l)+1$ and $\alpha\left(
v_{1}v_{n-1}\right) =3$;
\end{center}

5) for $p=2,\ldots,\frac{n}{2}$, let
\begin{center}
$\alpha\left(uv_{p}\right)=2p+1$ and $\alpha\left(uv_{1}\right)=4$;
\end{center}

6) for $q=\frac{n}{2}+1,\ldots,n-1$, let
\begin{center}
$\alpha\left(uv_{q}\right)=2(n-q+2)$ and $\alpha(u)=n+3$.
\end{center}

Case 2: $n$ is odd.

Define a total coloring $\beta$ of the graph $W_{n}$ as follows:

1) for $i=1,2,\ldots,\lfloor\frac{n}{2}\rfloor$, let $\beta \left(
v_{i}\right) =2i-1$, $\beta\left(v_{i}v_{i+1}\right)=2i$;

2) for $j=\lceil\frac{n}{2}\rceil,\ldots,n-1$, let $\beta(v_{j})
=2(n-j+1)$;

3) for $k=\lceil\frac{n}{2}\rceil,\ldots,n-2$, let
\begin{center}
$\beta\left(v_{k}v_{k+1}\right)=2(n-k)+1$ and
$\beta\left(v_{1}v_{n-1}\right)=3$;
\end{center}

4) for $p=2,3,\ldots,\lceil\frac{n}{2}\rceil$, let
\begin{center}
$\beta\left(uv_{p}\right)=2p+1$ and $\beta\left(uv_{1}\right)=4$;
\end{center}

5) for $q=\lceil\frac{n}{2}\rceil+1,\ldots,n-1$, let
\begin{center}
$\beta\left(uv_{q}\right)=2(n-q+2)$ and $\beta(u) =n+3$.
\end{center}

It is not difficult to check that $\alpha$ is an interval total
$(n+3)$-coloring of the graph $W_{n}$, when $n$ is even, and $\beta
$ is an interval total $(n+3)$-coloring of the graph $W_{n}$, when
$n$ is odd.
 $~\square$
\end{proof}

\begin{remark}
\label{myremark2} Easy analysis shows that if $4\leq n\leq 8$, then
$W_{\tau}(W_{n})=n+3$.
\end{remark}

\begin{lemma}
\label{mylemma4} For any $n\geq 9$, we have $W_{\tau}(W_{n})\geq
n+4$.
\end{lemma}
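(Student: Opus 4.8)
The plan is to prove Lemma \ref{mylemma4} by constructing, for each $n \geq 9$, an explicit interval total $(n+4)$-coloring of the wheel $W_{n}$, since exhibiting a single such coloring establishes $W_{\tau}(W_{n}) \geq n+4$. The hub $u$ has degree $n-1$, so $S[u,\alpha]$ must be a block of $n$ consecutive colors; to reach the top color $n+4$ we want $\alpha(u)$ to be large (most naturally $\alpha(u)=n+4$, forcing $S[u,\alpha]=[5,n+4]$, so that the spoke colors $\{\alpha(uv_{k})\}$ fill $[5,n+4]\setminus\{n+4\}$). Each rim vertex $v_{i}$ has degree $3$, so its palette $S[v_{i},\alpha]$ must be $4$ consecutive colors comprising $\alpha(v_{i})$, the spoke $\alpha(uv_{i})$, and the two rim edges at $v_{i}$. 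First I would fix the spoke coloring as the dominant constraint: assign the spokes so that they run up one side of the rim and back down the other (roughly $\alpha(uv_{k}) = 2k+c$ for $k$ on the ``ascending'' half and $\alpha(uv_{l}) = 2(n-l)+c'$ on the ``descending'' half, as in Lemmas \ref{mylemma1}--\ref{mylemma3}), so that consecutive spokes differ by exactly $2$ and collectively occupy the required $n-1$ values.

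Next, with the spokes fixed, I would choose the rim-vertex colors and the rim-edge colors to complete each degree-$3$ palette into an interval of length $4$. For a rim vertex $v_{i}$ whose spoke has color $s=\alpha(uv_{i})$, the three remaining slots (the vertex color and two rim edges) together with $s$ must form four consecutive integers, so locally I have freedom to place $\alpha(v_{i})$ and the two incident rim edges at positions $s-3,\dots,s+3$ around $s$. I would split into the two parity cases $n$ even and $n$ odd exactly as in the preceding lemmas, giving piecewise formulas for $\alpha(v_{i})$, $\alpha(v_{k}v_{k+1})$, and the closing edge $\alpha(v_{1}v_{n-1})$, with small hand-tuned exceptions at the two ``turning'' indices near $\lfloor n/2\rfloor$ and near the wrap-around edge $v_{1}v_{n-1}$.

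The main obstacle will be ensuring global consistency at the two junctions: where the ascending and descending halves of the spoke labeling meet, and at the single rim edge $v_{1}v_{n-1}$ that closes the cycle. At these spots the two endpoints of a rim edge impose palette constraints that may conflict (the edge color must lie in the interval around each endpoint's spoke simultaneously), and this is precisely why the construction needs the threshold $n\geq 9$: for smaller $n$ the two halves are too short to separate the constraints, which is exactly the content of Remark \ref{myremark2} showing $W_{\tau}(W_{n})=n+3$ for $4\leq n\leq 8$. I would therefore verify properness (no two adjacent/incident elements share a color) and the consecutiveness of each $S[v_{i},\alpha]$ and of $S[u,\alpha]$ directly from the formulas, checking the boundary indices separately. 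The routine verification that every color in $[1,n+4]$ is used and that all four-element rim palettes are genuine intervals is mechanical once the formulas are in place, so I would state the coloring explicitly and assert, as the earlier lemmas do, that the interval and properness conditions are easily checked case by case.
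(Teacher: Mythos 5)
Your overall strategy --- exhibit one explicit interval total $(n+4)$-coloring, anchor it on the hub's forced interval of $n$ consecutive spoke-plus-hub colors, run the spokes up one side of the rim and down the other, and then patch the rim palettes at the two turning points and at the closing edge $v_{1}v_{n-1}$ --- is exactly the shape of the paper's proof. But the one concrete structural decision you commit to is impossible. If $S[u,\alpha]=[5,n+4]$ with $\alpha(u)=n+4$, then every spoke satisfies $\alpha(uv_{i})\geq 5$, so every rim vertex $v_{i}$ (of degree $3$) has $\min S[v_{i},\alpha]\geq \alpha(uv_{i})-3\geq 2$; since $\min S[u,\alpha]=5$ as well, the color $1$ is used nowhere, and you do not have an interval total $(n+4)$-coloring at all (after renaming it is only an $(n+3)$-coloring). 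This is precisely the argument the paper itself uses in Case~1 of the proof of Lemma~\ref{mylemma5} to rule out hub palettes starting at $5$ or higher. The color $1$ can only live in some rim palette $[1,4]$, which forces a spoke of color at most $4$, hence $\max S[u,\alpha]\leq n+3$; symmetrically the color $n+4$ forces $\min S[u,\alpha]\geq 2$. So the hub palette must be one of $[2,n+1]$, $[3,n+2]$, $[4,n+3]$, and both extreme colors $1$ and $n+4$ must be realized on the rim, far from the hub. The paper's construction takes $\alpha(u)=7$ with $S[u,\alpha]=[3,n+2]$, puts color $1$ on $v_{1}$ (with rim palette $[1,4]$) and color $n+4$ on a vertex near $v_{\lfloor n/2\rfloor}$ (with rim palette $[n+1,n+4]$), which is also where the requirement $n\geq 9$ enters: the ascending run of rim palettes from $[1,4]$ up to $[n+1,n+4]$ needs enough rim vertices to climb.

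Beyond this, the proposal never actually writes down the coloring: the vertex and rim-edge formulas, the behavior at the two junctions, and the closing edge are all deferred to ``routine verification once the formulas are in place.'' For a purely constructive existence statement the explicit formulas \emph{are} the proof, and since the anchor you chose cannot support any completion, the deferred part cannot be carried out as planned. The plan would need to be redone starting from a hub palette of the form $[c,c+n-1]$ with $2\leq c\leq 4$ and with the extreme colors placed on the rim.
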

\begin{proof} Clearly, for the proof of the lemma it suffices to construct
an interval total $(n+4)$-coloring of the graph $W_{n}$ for $n\geq
9$. We consider two cases.

Case 1: $n$ is even.

Define a total coloring $\alpha$ of the graph $W_{n}$ as follows:

1) $\alpha(u)=7$, $\alpha(v_{1})=1$, $\alpha(v_{2})=6$,
$\alpha(v_{3})=8$ and for $i=4,\ldots,\frac{n}{2}-2$, let $\alpha
\left(v_{i}\right) =2i+1$;

2) $\alpha(v_{\frac{n}{2}-1})=n+2$, $\alpha(v_{\frac{n}{2}})=n+4$
and for $j=\frac{n}{2}+1,\ldots,n-2$, let $\alpha(v_{j}) =2(n-j)$,
$\alpha(v_{n-1})=3$;

3) $\alpha(uv_{1})=3$, $\alpha(uv_{2})=5$ and for
$k=3,\ldots,\frac{n}{2}-1$, let
\begin{center}
$\alpha\left(uv_{k}\right)=2k+3$;
\end{center}

4) for $l=\frac{n}{2},\ldots,n-1$, let
\begin{center}
$\alpha\left(uv_{l}\right) =2(n-l+1)$;
\end{center}

5) $\alpha(v_{1}v_{2})=4$, $\alpha(v_{2}v_{3})=7$ and for
$p=3,\ldots,\frac{n}{2}-2$, let
\begin{center}
$\alpha\left(v_{p}v_{p+1}\right)=2(p+2)$;
\end{center}

6) for $q=\frac{n}{2}-1,\ldots,n-2$, let
\begin{center}
$\alpha\left(v_{q}v_{q+1}\right)=2(n-q)+1$ and $\alpha\left(
v_{1}v_{n-1}\right)=2$.
\end{center}

Case 2: $n$ is odd.

Define a total coloring $\beta$ of the graph $W_{n}$ as follows:

1) $\beta(u)=7$, $\beta(v_{1})=1$, $\beta(v_{2})=6$, $\beta
(v_{3})=8$ and for $i=4,\ldots,\lfloor\frac{n}{2}\rfloor-1$, let
$\beta\left(v_{i}\right) =2i+1$;

2) $\beta(v_{\lfloor\frac{n}{2}\rfloor})=n+4$,
$\beta(v_{\lceil\frac{n}{2}\rceil})=n+2$ and for
$j=\lceil\frac{n}{2}\rceil+1,\ldots,n-2$, let $\beta(v_{j})
=2(n-j)$, $\beta(v_{n-1})=3$;

3) $\beta(uv_{1})=3$, $\beta(uv_{2})=5$ and for
$k=3,\ldots,\lfloor\frac{n}{2}\rfloor$, let
\begin{center}
$\beta\left(uv_{k}\right) =2k+3$;
\end{center}

4) for $l=\lceil\frac{n}{2}\rceil,\ldots,n-1$, let
\begin{center}
$\beta \left(uv_{l}\right)=2(n-l+1)$;
\end{center}

5) $\beta(v_{1}v_{2})=4$, $\beta(v_{2}v_{3})=7$ and for
$p=3,\ldots,\lfloor\frac{n}{2}\rfloor$, let
\begin{center}
$\beta\left(v_{p}v_{p+1}\right)=2(p+2)$;
\end{center}

6) for $q=\lceil\frac{n}{2}\rceil,\ldots,n-2$, let
\begin{center}
$\beta\left(v_{q}v_{q+1}\right)=2(n-q)+1$ and
$\beta\left(v_{1}v_{n-1}\right)=2$.
\end{center}

It is easy to check that $\alpha$ is an interval total
$(n+4)$-coloring of the graph $W_{n}$, when $n$ is even, and $\beta
$ is an interval total $(n+4)$-coloring of the graph $W_{n}$, when
$n$ is odd.
 $~\square$
\end{proof}

\begin{figure}[h]
\begin{center}
\includegraphics[height=25pc,width=29pc]{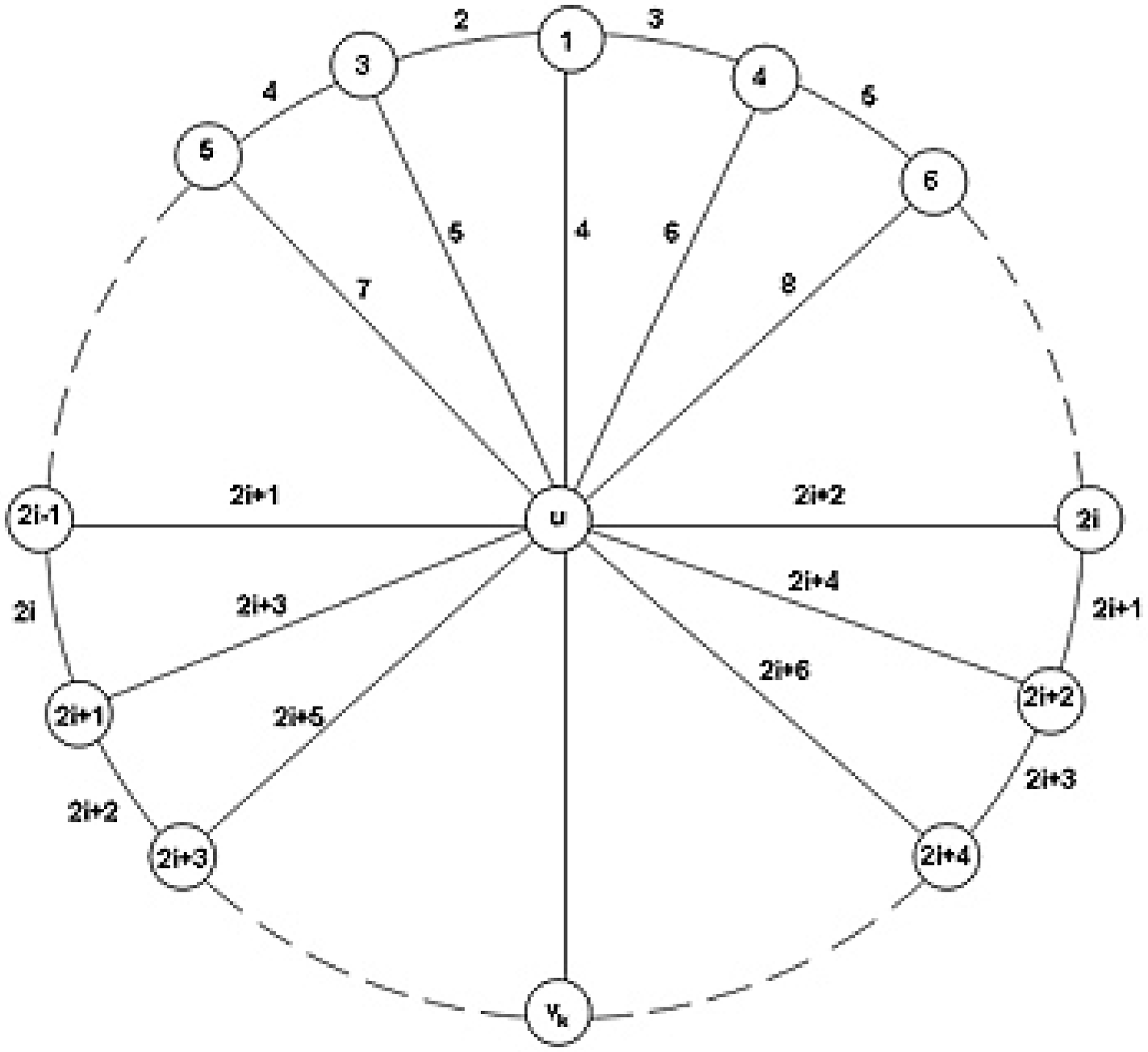}\
\caption{}
\end{center}
\end{figure}

\begin{lemma}
\label{mylemma5} For any $n\geq 4$, we have $W_{\tau}(W_{n})\leq
n+4$.
\end{lemma}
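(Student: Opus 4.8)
The plan is to fix an interval total $W_{\tau}(W_n)$-coloring $\alpha$ and analyze it from the hub $u$, whose degree is $n-1$. When $n=4$ we have $W_4=K_4$, so the bound $W_{\tau}(W_4)\le 8=n+4$ follows at once from Theorem~\ref{mytheorem10}; thus I would assume $n\ge 5$, where $u$ is the unique universal vertex and every rim vertex $v_i$ has degree $3$.

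First I would record the hub structure. Since $d_{W_n}(u)=n-1$, the set $S[u,\alpha]$ is an interval of length $n$, say $S[u,\alpha]=[a,a+n-1]$, and the $n$ colors $a,a+1,\ldots,a+n-1$ are distributed bijectively among $\alpha(u)$ and the $n-1$ hub edges $\alpha(uv_i)$. Each rim vertex $v_i$ satisfies $\vert S[v_i,\alpha]\vert=4$ and $\alpha(uv_i)\in S[v_i,\alpha]\subseteq[a,a+n-1]$, hence $S[v_i,\alpha]\subseteq[a-3,a+n+2]$. So every color of $\alpha$ lies in $[a-3,a+n+2]$. Since $\alpha$ uses all colors $1,\ldots,t$, the colors below $a$ are exactly $\{1,\ldots,a-1\}$ and the whole interval $[a,a+n-1]$ occurs as $S[u,\alpha]$; writing $B=a-1$ for the number of used colors below $a$ and $T$ for those above $a+n-1$, we get $W_{\tau}(W_n)=n+B+T$, together with $a-3\le 1$, i.e. $B\le 3$, and symmetrically $T\le 3$. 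This already yields the weaker estimate $W_{\tau}(W_n)\le n+6$, so the real work is to improve the count of excess colors to $B+T\le 4$.

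The improvement rests on a local analysis of the at most three colors on each side of $[a,a+n-1]$. A color $a-j$ ($1\le j\le 3$) can appear only if some rim vertex $v_i$ has $\alpha(uv_i)\le a+(3-j)$ with $S[v_i,\alpha]$ reaching down to $a-j$; in particular $B=3$ forces a necessarily unique rim vertex $v_{i_0}$ with $\alpha(uv_{i_0})=a$ and $S[v_{i_0},\alpha]=[a-3,a]$, whence $\alpha(u)\ne a$ and the vertex color $\alpha(v_{i_0})$ together with the two rim edges at $v_{i_0}$ occupy $\{a-3,a-2,a-1\}$. The symmetric statement holds at the top. Using Proposition~\ref{myproposition1} (the inverse coloring $t+1-\alpha$, which swaps the two ends) it then suffices to rule out $B=3$ occurring together with $T\ge 2$.

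The main obstacle is exactly this last exclusion, and here I would exploit the cyclic structure of the rim rather than purely local intervals. The hub-edge colors $c_i=\alpha(uv_i)$ are pairwise distinct and exhaust $[a,a+n-1]\setminus\{\alpha(u)\}$, and two rim-adjacent vertices $v_i,v_{i+1}$ share the edge $v_iv_{i+1}$, whose color lies in both length-$4$ intervals $S[v_i,\alpha]$ and $S[v_{i+1},\alpha]$, forcing $\vert c_i-c_{i+1}\vert\le 6$. Starting from the \emph{low} vertex $v_{i_0}$ (with $c_{i_0}=a$ and small rim edges in $\{a-3,a-2,a-1\}$), I would propagate these constraints around the cycle toward any vertex whose star reaches above $a+n-1$, and show that the small colors pinned at $v_{i_0}$ are incompatible with two distinct colors appearing above the hub interval. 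I expect this cyclic propagation, combined with the bookkeeping of which of $a,a+1,a+2$ and $a+n-3,a+n-2,a+n-1$ actually occur as hub-edge colors (this depends on the position of $\alpha(u)$ inside $[a,a+n-1]$), to be the delicate step; the structural reductions above cut it down to a finite case check that one verifies directly, as suggested by Figure~1.
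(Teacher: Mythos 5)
Your setup matches the paper's: the bound $W_{\tau}(W_n)\leq n+6$ from the universal vertex, the observation that $S[u,\alpha]$ is an interval of length $n$ whose position forces all colors into $[a-3,a+n+2]$, the identification of the unique rim vertex $v_{i_0}$ with $\alpha(uv_{i_0})=a$ and $S[v_{i_0},\alpha]=[a-3,a]$ when three colors lie below $a$, and the use of Proposition~\ref{myproposition1} to reduce to one side. All of that is correct and is essentially the paper's Case 2 reduction in different notation.

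The gap is that you stop exactly where the proof begins. The only quantitative tool you put on the table for the exclusion step is $\vert c_i-c_{i+1}\vert\leq 6$ for rim-adjacent hub edges, and this is far too weak: it allows the hub-edge colors to wander arbitrarily around $[a,a+n-1]$ and by itself rules out nothing. What is actually needed (and what the paper does) is a two-sided induction along the rim starting at $v_{i_0}$: because $S[v_{i_0},\alpha]=[a-3,a]$ pins the two rim edges and the vertex color at $v_{i_0}$, each successive star $S[v_i,\alpha]$ is forced to be the previous one shifted up by exactly $2$, in both directions around the cycle (concretely, $\alpha(v_i)=2i-1$, $\alpha(v_iv_{i+1})=2i$, $\alpha(uv_i)=2i+1$ on one arc and the mirror formulas on the other). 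Running both arcs until they meet the vertex or edge carrying the top color $t$ yields two incompatible bounds on its position ($k\geq n/2+2$ from one arc, $k\leq n/2-1$ from the other), and that is the contradiction. This is a genuine induction in $n$, not ``a finite case check that one verifies directly'' — $n$ is unbounded, so no finite verification can close it. Your proposal names the right strategy (``propagate around the cycle'') but supplies neither the induction hypothesis nor the step, so the statement is not proved. You would also still need to note, as the paper does, that the same propagation handles $t=n+6$ (i.e.\ $B=T=3$), not only $t=n+5$.
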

\begin{proof}
First, by Theorem \ref{mytheorem5}, we have $W_{\tau}(W_{n})\leq
n+6$ for any $n\geq 4$.

Next we prove that $W_{n}\notin \mathfrak{T}_{n+5}$.

Suppose, to the contrary, that $\alpha$ is an interval total
$(n+5)$-coloring of the graph $W_{n}$ for $n\geq 4$.

Consider the vertex $u$. Clearly,

\begin{center}
$1\leq \min S[u,\alpha]\leq 6$,
\end{center}
hence
\begin{center}
$n\leq \max S[u,\alpha]\leq n+5$.
\end{center}

Proposition \ref{myproposition1} implies that the following three
cases are possible:

1) $S[u,\alpha]=[6,n+5]$;

2) $S[u,\alpha]=[5,n+4]$;

3) $S[u,\alpha]=[4,n+3]$.

Case 1: $S[u,\alpha]=[6,n+5]$ or $S[u,\alpha]=[5,n+4]$.

Clearly, $\alpha(uv_{i})\geq 5$ for $i=1,\ldots,n-1$. This implies
that $\min S[v_{i},\alpha]\geq 2$ for $i=1,\ldots,n-1$, which is a
contradiction.

Case 2: $S[u,\alpha]=[4,n+3]$.

First we show that $\alpha(u)\neq 4$. Suppose that $\alpha(u)=4$.
This implies that $\alpha(uv_{i})\geq 5$ for $i=1,\ldots,n-1$, which
is a contradiction.

Let $e=uv_{1}$ and $\alpha(e)=4$. Note that $\alpha(v_{1})=1$.

Without loss of generality, we may assume that
$\alpha(v_{1}v_{2})=2$, $\alpha(v_{1}v_{n-1})=3$,
$\alpha(uv_{2})=5$, $\alpha(uv_{n-1})=6$, and there is a vertex
$v_{k}$ such that either $\alpha(v_{k})=n+5$ or
$\alpha(v_{k}v_{k+1})=n+5$ (see Fig. 1).

Let us consider simple paths

\begin{center}
$P_{1}=\left(v_{1},v_{1}v_{2},v_{2},\ldots,v_{k},v_{k}v_{k+1},v_{k+1}\right)$
and
$P_{2}=\left(v_{n-1},v_{n-1}v_{n-2},v_{n-2},\ldots,v_{k+1},v_{k+1}v_{k},v_{k}\right)$,
\end{center}
where $1\leq k\leq n-2$.

Let us show that

1) $\alpha(v_{i})=2i-1$, $\alpha(v_{i}v_{i+1})=2i$, $\alpha
(uv_{i})=2i+1$,

2) $\alpha(v_{n+1-i})=2i$, $\alpha(v_{n-i}v_{n+1-i})=2i+1$, $\alpha
(uv_{n+1-i})=2(i+1)$,\\
for $i=2,\ldots,k$.

We use induction on $i$. For $i=2$, it suffices to prove that
$\alpha(v_{2})=3$, $\alpha(v_{2}v_{3})=4$, $\alpha(v_{n-1})=4$,
$\alpha(v_{n-2}v_{n-1})=5$.

Consider the vertex $v_{2}$. Since $\alpha(v_{1}v_{2})=2$ and
$\alpha(uv_{2})=5$, we have $\min S[v_{2},\alpha]=2$ and $\max
S[v_{2},\alpha]=5$, hence $\{3,4\}\subseteq S[v_{2},\alpha]$. If we
suppose that $\alpha(v_{2})=4$, then $\alpha(v_{2}v_{3})=3$ and
$\max S[v_{3},\alpha]<7$, which contradicts $\max
S[v_{3},\alpha]\geq 7$. From this we have $\alpha(uv_{3})=7$ (see
Fig. 1).

Now we consider the vertex $v_{n-1}$. Since $\alpha(v_{1}v_{n-1})=3$
and $\alpha(uv_{n-1})=6$, we have $\min S[v_{n-1},\alpha]=3$ and
$\max S[v_{n-1},\alpha]=6$, hence $\{4,5\}\subseteq
S[v_{n-1},\alpha]$. If we suppose that $\alpha (v_{n-1})=5$, then
$\alpha(v_{n-2}v_{n-1})=4$ and $\max S[v_{n-2},\alpha]<8$, which
contradicts $\max S[v_{n-2},\alpha]\geq 8$ (see Fig. 1).

Suppose that the statements 1) and 2) are true for all $i^{\prime}$,
$1\leq i^{\prime}\leq i$. We prove that the statements 1) and 2) are
true for the case $i+1$, that is, $\alpha(v_{i+1})=2i+1$, $\alpha
(v_{i+1}v_{i+2})=2i+2$, $\alpha(uv_{i+1})=2i+3$ and $\alpha
(v_{n-i})=2i+2$, $\alpha(v_{n-i-1}v_{n-i})=2i+3$, $\alpha
(uv_{n-i})=2i+4$. From the induction hypothesis we have:

$1^{\prime}$) $\alpha(v_{j})=2j-1$, $\alpha(v_{j}v_{j+1})=2j$,
$\alpha(uv_{j})=2j+1$,

$2^{\prime}$) $\alpha(v_{n+1-j})=2j$, $\alpha
(v_{n-j}v_{n+1-j})=2j+1$, $\alpha
(uv_{n+1-j})=2(j+1)$,\\
for $j=2,\ldots,i$.

$1^{\prime}$) and $2^{\prime}$) implies that $\alpha(uv_{i+1})=2i+3$
and $\alpha(uv_{n-i})=2i+4$.

Consider the vertex $v_{i+1}$. Since $\alpha(v_{i}v_{i+1})=2i$ and
$\alpha(uv_{i+1})=2i+3$, we have $\min S[v_{i+1},\alpha]=2i$ and
$\max S[v_{i+1},\alpha]=2i+3$, hence $\{2i+1,2i+2\}\subseteq
S[v_{i+1},\alpha]$. If we suppose that $\alpha(v_{i+1})=2i+2$, then
$\alpha(v_{i+1}v_{i+2})=2i+1$ and $\max S[v_{i+2},\alpha]<2i+5$,
which contradicts $\max S[v_{i+2},\alpha]\geq 2i+5$. From this we
have $\alpha(uv_{i+2})=2i+5$ (see Fig. 1).

Next we consider the vertex $v_{n-i}$. Since
$\alpha(v_{n+1-i}v_{n-i})=2i+1$ and $\alpha(uv_{n-i})=2i+4$, we have
$\min S[v_{n-i},\alpha]=2i+1$ and $\max S[v_{n-i},\alpha]=2i+4$,
hence $\{2i+2,2i+3\}\subseteq S[v_{n-i},\alpha]$. If we suppose that
$\alpha(v_{n-i})=2i+3$, then $\alpha (v_{n-i-1}v_{n-i})=2i+2$ and
$\max S[v_{n-i-1},\alpha]<2i+6$, which contradicts $\max
S[v_{n-i-1},\alpha]\geq 2i+6$ (see Fig. 1).

By $1^{\prime}$), we have $k\geq \frac{n}{2}+2$.

By $2^{\prime}$), we have $k\leq \frac{n}{2}-1$.

It is easy to see that does not exist such an index $k$, which
satisfy the aforementioned inequalities. This completes the prove of
the case 2.

Similarly, it can be shown that $W_{n}\notin \mathfrak{T}_{n+6}$,
hence $W_{\tau}(W_{n})\leq n+4$ for any $n\geq 4$.
 $~\square$
\end{proof}

From Lemmas \ref{mylemma1}-\ref{mylemma5} and Remark
\ref{myremark2}, we have the following result:

\begin{theorem}
\label{mytheorem12} For $n\geq 4$, we have
\begin{description}
\item[(1)] $W_{n}\in \mathfrak{T}$,

\item[(2)] $w_{\tau}(W_{n})=\left\{
\begin{tabular}{ll}
$n+2$, & if $n=4$, \\
$n$, & if $n\geq 5$, \\
\end{tabular}%
\right.$

\item[(3)] $W_{\tau}(W_{n}) =\left\{
\begin{tabular}{ll}
$n+3$, & if $4\leq n \leq 8$,\\
$n+4$, & if $n\geq 9$,\\
\end{tabular}%
\right.$

\item[(4)] if $w_{\tau}(W_{n})\leq t\leq W_{\tau}(W_{n})$, then $
W_{n}\in \mathfrak{T}_{t}$.
\end{description}
\end{theorem}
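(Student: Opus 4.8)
The plan is to assemble all four parts directly from Lemmas \ref{mylemma1}--\ref{mylemma5} together with Remark \ref{myremark2}; no new coloring needs to be constructed, so the entire argument is a matter of organizing the case analysis so that every admissible value of $t$ is accounted for. Parts (1) and (2) are immediate, since Lemma \ref{mylemma1} already asserts $W_{n}\in\mathfrak{T}$ for all $n\geq 4$ and computes $w_{\tau}(W_{n})$, namely $n+2$ when $n=4$ and $n$ when $n\geq 5$.

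For part (3) I would split on the size of $n$. When $4\leq n\leq 8$, Lemma \ref{mylemma3} gives $W_{\tau}(W_{n})\geq n+3$, and Remark \ref{myremark2} supplies the matching equality $W_{\tau}(W_{n})=n+3$ for exactly this range. When $n\geq 9$, Lemma \ref{mylemma4} gives $W_{\tau}(W_{n})\geq n+4$ while Lemma \ref{mylemma5} gives $W_{\tau}(W_{n})\leq n+4$, so the two bounds pinch to $W_{\tau}(W_{n})=n+4$. This settles (3).

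For part (4) I would fix $n$ and list every integer $t$ with $w_{\tau}(W_{n})\leq t\leq W_{\tau}(W_{n})$, exhibiting for each an interval total $t$-coloring from one of the lemmas. For $n=4$ the range is $\{6,7\}$: the value $t=6=w_{\tau}(W_{4})$ comes from Lemma \ref{mylemma1} and $t=7=n+3$ from the construction in Lemma \ref{mylemma3}. For $5\leq n\leq 8$ the range is $\{n,n+1,n+2,n+3\}$, covered respectively by Lemma \ref{mylemma1} ($t=n$), Lemma \ref{mylemma2} ($t=n+1$ and $t=n+2$), and Lemma \ref{mylemma3} ($t=n+3$). For $n\geq 9$ the range is $\{n,n+1,n+2,n+3,n+4\}$, covered by Lemma \ref{mylemma1}, Lemma \ref{mylemma2}, Lemma \ref{mylemma3} and Lemma \ref{mylemma4} in the same way, with the top value $t=n+4$ now supplied by Lemma \ref{mylemma4}.

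The only place demanding care, and the main obstacle such as it is, is verifying that these lists are exhaustive and that the boundary cases line up. Lemma \ref{mylemma2} is stated only for $n\geq 5$, so the case $n=4$ must be handled separately; fortunately there the two endpoints of the range are consecutive, leaving no intermediate value to fill. One must also match the transition of $W_{\tau}$ from $n+3$ to $n+4$ at $n=9$ against the correct source lemma for the largest color count. Once these boundaries are checked, every $t$ in the interval $[w_{\tau}(W_{n}),W_{\tau}(W_{n})]$ admits an interval total $t$-coloring of $W_{n}$, which is precisely statement (4).
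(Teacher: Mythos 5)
Your proposal is correct and matches the paper's own treatment: the paper proves Theorem \ref{mytheorem12} simply by citing Lemmas \ref{mylemma1}--\ref{mylemma5} and Remark \ref{myremark2}, exactly the assembly you carry out. Your explicit enumeration of the admissible values of $t$ in part (4), including the separate handling of $n=4$ where Lemma \ref{mylemma2} does not apply, is just a more detailed write-up of the same argument.
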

\bigskip

\begin{acknowledgement}
We would like to thank Rafayel R. Kamalian for his attention to this
work.
\end{acknowledgement}

\end{document}